\documentclass[aps, pra, a4paper, twocolumn, 10pt, showpacs,superscriptaddress]{revtex4}
\usepackage{bbm, amsmath, amssymb, amsthm, bm,textcomp, nicefrac, mathtools}
\usepackage{amsthm,amssymb,amsmath,graphicx,epsfig,color,verbatim,enumerate,amsthm}
%
%
%


\usepackage{xy}
\xyoption{matrix}
\xyoption{frame}
\xyoption{arrow}
\xyoption{arc}

\usepackage{ifpdf}
\ifpdf
\else
\PackageWarningNoLine{Qcircuit}{Qcircuit is loading in Postscript mode.  The Xy-pic options ps and dvips will be loaded.  If you wish to use other Postscript drivers for Xy-pic, you must modify the code in Qcircuit.tex}
\xyoption{ps}
\xyoption{dvips}
\fi

\entrymodifiers={!C\entrybox}

\newcommand{\bra}[1]{{\left\langle{#1}\right\vert}}
\newcommand{\ket}[1]{{\left\vert{#1}\right\rangle}}
\newcommand{\qw}[1][-1]{\ar @{-} [0,#1]}
\newcommand{\qwx}[1][-1]{\ar @{-} [#1,0]}


\newcommand{\gate}[1]{*+<.6em>{#1} \POS ="i","i"+UR;"i"+UL **\dir{-};"i"+DL **\dir{-};"i"+DR **\dir{-};"i"+UR **\dir{-},"i" \qw}






\newcommand{\control}{*!<0em,.025em>-=-<.2em>{\bullet}}

\newcommand{\ctrl}[1]{\control \qwx[#1] \qw}

\newcommand{\targ}{*+<.02em,.02em>{\xy ="i","i"-<.39em,0em>;"i"+<.39em,0em> **\dir{-}, "i"-<0em,.39em>;"i"+<0em,.39em> **\dir{-},"i"*\xycircle<.4em>{} \endxy} \qw}

\newcommand{\multigate}[2]{*+<1em,.9em>{\hphantom{#2}} \POS [0,0]="i",[0,0].[#1,0]="e",!C *{#2},"e"+UR;"e"+UL **\dir{-};"e"+DL **\dir{-};"e"+DR **\dir{-};"e"+UR **\dir{-},"i" \qw}
\newcommand{\ghost}[1]{*+<1em,.9em>{\hphantom{#1}} \qw}
\newcommand{\push}[1]{*{#1}}


\newcommand{\lstick}[1]{*!R!<.5em,0em>=<0em>{#1}}


\newcommand{\Qcircuit}{\xymatrix @*=<0em>}



\newcommand{\bracket}[2]{\left\langle #1|#2\right\rangle}
\newcommand\defn[1]{\textsl{#1}}

\newcommand\ketbra[1]{|#1\rangle\langle#1|}
\newcommand\cH{{\mathcal H}}
\newcommand\cM{{\mathcal M}}
\newcommand\cN{{\mathcal N}}
\newcommand\cR{{\mathcal R}}

\newcommand\cD{{\mathcal D}}
\newcommand\cI{{\mathcal I}}
\newcommand\cB{{\mathcal B}}
\newcommand\cE{{\mathcal E}}

\newcommand\cP{{\mathcal P}}
\def\sx{\sigma_x}
\def\sy{\sigma_y}
\def\sz{\sigma_z}

\def\sx{\sigma_x}
\def\sy{\sigma_y}
\def\sz{\sigma_z}

\newcommand{\bea}{\begin{eqnarray}}
\newcommand{\eea}{\end{eqnarray}}
\def\bi{\begin{itemize}}
\def\ei{\end{itemize}}
\def\bc{\begin{center}}
\def\ec{\end{center}}

\def\C{\hbox{$\mit I$\kern-.7em$\mit C$}}
\def\R{\hbox{$\mit I$\kern-.6em$\mit R$}}

\def\ket#1{|#1\rangle}
\newcommand{\one}{\mbox{$1 \hspace{-1.0mm}  {\bf l}$}}

\def\ket#1{\left| #1\right>}
\def\bra#1{\left< #1\right|}

\newcommand{\proj}[1]{\ket{#1}\bra{#1}}

\makeatletter
\newtheorem*{rep@theorem}{\rep@title}
\newcommand{\newreptheorem}[2]{%
\newenvironment{rep#1}[1]{%
 \def\rep@title{#2 \ref{##1}}%
 \begin{rep@theorem}}%
 {\end{rep@theorem}}}
\makeatother

\newreptheorem{theorem}{Theorem}
\newtheorem{lemma}{Lemma}

\newtheorem{theorem}{Theorem}

\begin{document}
\title{Efficient quantum communication under collective noise}
\author{Michael Skotiniotis}
\affiliation{Institute
for Quantum Information Science, University of Calgary,2500 University
Drive NW,Calgary AB, T2l 1N1, Canada}
\affiliation{Institut f\"ur Theoretische Physik, Universit\"at
  Innsbruck, Technikerstr. 25, A-6020 Innsbruck,  Austria.}
\author{Wolfgang D\"{u}r}
\affiliation{Institut f\"ur Theoretische Physik, Universit\"at
  Innsbruck, Technikerstr. 25, A-6020 Innsbruck,  Austria.}
\author{Barbara Kraus}
\affiliation{Institut f\"ur Theoretische Physik, Universit\"at
  Innsbruck, Technikerstr. 25, A-6020 Innsbruck,  Austria.}
\date{\today}
\begin{abstract}
We introduce a new quantum communication protocol for the transmission of quantum information under collective noise. Our protocol utilizes a decoherence-free subspace in such a way that an optimal asymptotic transmission rate is achieved, while at the same time encoding and decoding operations can be efficiently implemented. The encoding and decoding circuit requires a number of elementary gates that scale linearly with the number of transmitted qudits, $m$.  The logical depth of our encoding and decoding operations is constant and depends only on the channel in question. For channels described by an arbitrary discrete group $G$, i.e.~with a discrete number, $\lvert G\rvert$, of possible noise operators, perfect transmission at a rate $m/(m+r)$ is achieved with an overhead that scales at most as $\mathcal{O}(d^r)$ where the number of auxiliary qudits, $r$, depends solely on the group in question. Moreover, this overhead is independent of the number of transmitted qudits, $m$.  For certain groups, e.g.~cyclic groups, we find that the overhead scales only linearly with the number of group elements $|G|$.  
\end{abstract}
\pacs{03.67.Hk, 03.67.Pp, 03.67.Ac, 03.67.-a}
\maketitle
\section{\label{sec:1}Introduction}

The transmission of quantum information between several communication partners is a central element of many applications of quantum information theory, including quantum cryptography and quantum key distribution~\cite{BB84}, quantum networks~\cite{CZKM97,K08}, and distributed quantum computation~\cite{CEHM99}. In a realistic set-up quantum communication is subject to noise and imperfections leading to imperfect communication channels. Several methods are known to deal with such a situation, including teleportation-based communication utilizing entanglement purification \cite{Dur07}, or encoding of quantum information using quantum error-correcting codes~\cite{Sh95,St96a,St96b,BDSW96,KL97,EM96,LMPZ96,CS96,Go96,CRSS97,CRSS98}. In cases where only restricted types of errors occur, i.e.~when parties lack a shared phase or spatial reference frame (see~\cite{BRS07} and references therein), or when quantum communication is subject to collective noise, error-avoiding schemes are also available~\cite{ZR98, Zan00,LCW98,BKLW00,LBW99,KLV99,KBLW01,VFPKLC01}. In such schemes information is stored and transmitted in a decoherence-free subspace (DFS).
A similar problem occurs when considering the storage of quantum information where the source of collective noise is, for example, a globally fluctuating magnetic field.

In this paper we propose a novel quantum communication and storage scheme capable of dealing with collective noise associated to an arbitrary finite group $G$. The scheme is efficient with respect to the required encoding and decoding operations, and is capable of achieving an optimal transmission or storage rate in the asymptotic limit. Our scheme is mainly based on the usage of a DFS, however elements from standard error-correcting codes are also utilized. Whereas error-avoiding schemes encode quantum information directly into a DFS, error-correcting schemes usually involve measurements of ancilla systems whose outcome allows one to determine the kind of error that occurred and deduce the required correction operations.  We use ancilla systems for the storage or transmission of $m$ qudits in such a way that we construct a joint state of system plus ancilla that lies within a DFS. Similar to standard error correction schemes we measure the ancilla system and use the measurement outcome to determine the required correction operation. However, in our scheme the measurement does not reveal any information about the channel or the kind of error that occurred. In addition, the correction operation is local, i.e.~consists of single-system operations only, making our protocol suitable in multiparty communication scenarios with multiple receivers.

Our protocol is an extension of the measure and re-align protocol proposed in~\cite{BRST09} for communicating quantum information between parties lacking a shared frame of reference.  Such a protocol was shown to be optimal in~\cite{BCDFP10}. Contrary to the protocol in~\cite{BRST09}, where Bob measures an auxiliary system to learn the relative transformation between his and Alice's frame of reference, our protocol hides Alice's direction from Bob while still allowing perfect communication of quantum information.  

We show that for channels described by an arbitrary discrete group $G$, i.e.~with a discrete number, $\lvert G\rvert$, of possible noise operators, the number of ancillary systems, $r$, is independent of the number of qudits, $m$, to be transmitted. Our protocol achieves a transmission rate of $m/(m+r)$, i.e.~$m+r$ physical qudits need to be send through a noisy channel to faithfully transmit $m$ logical qudits, that is asymptotically optimal approaching unity in the large $m$-limit. Furthermore, we provide an explicit encoding and decoding circuit that is efficient. In particular, the required number of single and two-qubit gates (henceforth referred to as elementary gates) for any finite group $G$ scales as $\mathcal{O}(m,|G|\log(|G|), d^r)$. The additional overhead of $\mathcal{O}(d^r)$ gates is required in order to prepare the joint state of system plus ancilla in a DFS, where $r$ is some (finite) integer that depends solely on the channel in question~\footnote{Note that $r$ itself depends on the number of group elements and the specific group in question, which may lead to a worse than linear scaling in $|G|$.}.  In addition, the logical depth of the circuit, the amount of time required by the circuit to generate the desired state, is $\mathcal{O}(|G|\log(|G|),d^r)$, independent of $m$. For the case where $G$ is a cyclic group this overhead can be shown to scale only linearly ($\mathcal{O}(|G|)$) with the number of group elements. 

Since the number of elementary gates required to implement our protocol scales linearly with the number, $m$, of logical qudits our protocol is more efficient than the best error-avoiding communication scheme~\cite{BCH06}.  The latter  scales as $N\mathrm{poly}(\log(N),d)$, where $N$ is the total number of $d$-dimensional physical systems. Recently, an alternative implementation for encoding and decoding in a DFS was proposed that also scales linearly in the number of logical qubits being transmitted, but achieves an asymptotic rate of $1/2$~\cite{LNPST11}~\footnote{Private communication with Mikio Nakahara.}.  Due to the asymptotically optimal rate of transmission and linear scaling of our protocol a practical implementation seems feasible.

This paper is organized as follows. In Sec.~\ref{sec:2} we review the concept of DFS and some basic results within group representation theory. We also introduce our notation in this section and review previous encoding schemes utilizing a DFS~\cite{BCH06}. In Sec.~\ref{sec:3} we present a novel method to encode information in a DFS for collective noise described by an arbitrary finite group, $G$, where we make use of a finite number, $r$, of auxiliary systems. In Sec.~\ref{sec:4} we explicitly construct encoding and decoding circuit implementations for our protocol and discuss the required resources, i.e.~the number of elementary gates, and the logical depth of these circuits. A direct comparison with other methods~\cite{BCH06,LNPST11} shows that our scheme is more efficient. We also consider the transmission rate of our protocol and show its asymptotic optimality. 
We summarize and conclude in Sec. \ref{Conclusion}.

\section{Decoherence-Free Subspace}\label{sec:2}

In this section we review some basic results of group representation theory, in particular the concept of a DFS. Moreover, we outline the basic principles underpinning the best known quantum communication protocols which utilize DFS~\cite{BCH06}.

The problem we are considering is the transmission of quantum information through a quantum channel with collective noise.  One party, the sender, prepares $N$, $d$-dimensional quantum systems in some state, $\rho\in\cB(\cH_d^{\otimes N})$, and sends them through a noisy quantum channel to one or more parties, the receivers. The noise of the quantum channel is described by a set of $M$ operators, $\{U_{g_i},\, i\in(0,\ldots,M-1)\}$, and a probability distribution, $\{p_{g_i}\}$ with $p_{g_i}>0, \forall i\in (0,\ldots,M-1)$ where $\sum_{i=0}^{M-1} p_{g_i}=1$. The noise of the channel is assumed to be {\em collective}.  That is, the same noise operator acts on each of the transmitted $d$-dimensional systems.  After transmission through the channel the quantum state of the $N$ systems is given by
\begin{eqnarray}\nonumber
\cE[\rho]&=&\sum_{i=0}^{M-1} p_{g_i} \left(U^{(1)}_{g_i}\otimes U^{(2)}_{g_i}\otimes\cdots\otimes U^{(N)}_{g_i}\right)[\rho]\\
&&\times\left(U^{(1)}_{g_i}\otimes U^{(2)}_{g_i}\otimes\cdots\otimes U^{(N)}_{g_i}\right)^{\dagger},
\label{1}
\end{eqnarray}
where $U^{(k)}_{g_i}$ denotes the operator $U_{g_i}$ acting on the $k^{\mathrm{th}}$ quantum system. Hence, the receiver(s) obtain corrupt quantum data. Our task is to construct a communication protocol that allows for efficient, error-free communication of quantum information through quantum channels subject to collective noise.

The fact that the noise of the channel is collective allows for the construction of \defn{error-avoiding} protocols~\cite{ZR98, Zan00,LCW98,BKLW00,LBW99,KLV99,KBLW01,VFPKLC01}, i.e.~protocols that can protect quantum information from being corrupted, as we now review.   We can assume, without loss of generality, that the set of operations, $\{U_{g_i},\, i\in(0,\ldots,M-1)\}$, forms a \emph{ unitary representation} of a symmetry group, $G$, acting on the $d$-dimensional Hilbert space, $\cH_d$~\footnote{If the set $\{U_{g_i},\,i\in(0,\ldots, M-1)\}$ does not form a group we can always supplement it with additional operations whose probability of occurrence is zero.}.  We focus only on unitary representations as it is known that every representation of a finite group is equivalent to a unitary representation~\cite{Sternberg}. Two representations, $U$ and $T$, of a group $G$ are \defn{equivalent} if there exists an invertible matrix, $V$, such that for all $g_i\in G$, $VU_{g_i}V^\dagger=T_{g_i}$.  Due to Schur's lemmata~\cite{Sternberg}, the collective representation, $U_{g_i}^{\otimes N}\equiv U^{(1)}_{g_i}\otimes\cdots\otimes U^{(N)}_{g_i},\, g_i\in G$, can be decomposed into irreducible representations (irreps) of $G$, $U^{(\lambda)}$, as follows
\begin{equation}
U_{g_i}^{\otimes N}=\bigoplus_\lambda \alpha^{(\lambda)}\, U_{g_i}^{(\lambda)}, \, \forall g_i\in G.
\label{2}
\end{equation}
Here, $\lambda$ labels the inequivalent irreps of $G$, and $\alpha^{(\lambda)}$ denotes the multiplicity of irrep $U^{(\lambda)}$. Let us denote by $\{\ket{\lambda,m,\beta}\}$ the orthonormal basis in which all matrices $U_{g_i}^{\otimes N},\, i\in(0,\ldots,M-1)$, are block diagonal. As before, $\lambda$ labels the inequivalent irreps of $G$, and $m$ labels an orthonormal basis of the space, $\cM^{(\lambda)}$, on which $U_{g_i}^{(\lambda)}$ acts upon.  Note that the dimension of $\cM^{(\lambda)}$ coincides with the dimension of the irrep $U^{(\lambda)}$. Moreover, $\beta$ labels an orthonormal basis of $\cN^{(\lambda)}\equiv \C^{\alpha^{(\lambda)}}$, the space of dimension $\alpha^{(\lambda)}$, on which $\{U_{g_i}^{\otimes N},\, g_i\in G\}$ acts trivially for any $g_i\in G$.

The total Hilbert space, $\cH_d^{\otimes N}$, can be conveniently written with respect to the block diagonal basis, $\{\ket{\lambda,m,\beta}\}$, as
\begin{equation}
\cH_d^{\otimes N}=\bigoplus_\lambda \cH^{(\lambda)}=\bigoplus_\lambda \cM^{(\lambda)}\otimes \cN^{(\lambda)}.
\label{3}
\end{equation}
Writing an arbitrary state, $\ket{\psi}\in\cH_d^{\otimes N}$, in the basis $\{\ket{\lambda,m,\beta}\}$, i.e.
\begin{equation}
\ket{\psi}=\sum_{\lambda,m,\beta} v_{\lambda,m,\beta} \ket{\lambda,m,\beta},
\label{4}
\end{equation}
where $v_{\lambda,m,\beta}\in\C$ satisfy $\sum_{\lambda,m,\beta}\lvert v_{\lambda,m,\beta}\rvert^2=1$, and acting on this state with the operator $U_{g_i}^{\otimes N}$ yields
\begin{equation}
U_{g_i}^{\otimes N}\ket{\psi}=\sum_{\lambda,m,m',\beta} v_{\lambda,m,\beta}\, u^{(\lambda)}_{m',m}(g_i)\ket{\lambda,m',\beta},
\label{5}
\end{equation}
where $u^{(\lambda)}_{m',m}(g_i)\equiv \bra{\lambda, m'}U_{g_i}^{\otimes N}\ket{\lambda, m}$.  Therefore, the action of the collective noise operations, $\{U_{g_i}^{\otimes N},\, g_i\in G\}$, affects the index associated to the {\em carrier spaces}, $\cM^{(\lambda)}$, but not the index associated to the {\em multiplicity spaces}, $\cN^{(\lambda)}$.
 
Using the results reviewed above it can be shown that, if the probability distribution, $\{p_{g_i}\}$, is the uniform prior Eq.~\eqref{1} reduces to
\begin{equation}
\cE[\rho]=\sum_\lambda \left(\cD_{\cM^{(\lambda)}}\otimes \cI_{\cN^{(\lambda)}}\right)\circ \cP^{(\lambda)}[\rho],
\label{6}
\end{equation}
where $\cD$ is the completely depolarizing map, $\cD(A)=\frac{\mbox{tr}(A)}{\mathrm{dim}(\cH)} \one,\forall A\in\cB(\cH)$, $\cI$ is the identity map, and $\cP^{(\lambda)}(A)=\Pi_\lambda A\Pi_\lambda$, where $\Pi_\lambda$ is the projector onto the space $\cH^{(\lambda)}$~\footnote{If $\{p_{g_i}\}$ is not the uniform prior then the map $\cD$ is a partially decohering map.}.  Such a situation is encountered, for instance, if the collective noise is due to the complete lack of a shared frame of reference associated with the symmetry group $G$~\cite{BRS07}.

The discussion above shows that in the presence of collective noise the total Hilbert space, $\cH_d^{\otimes N}$, can be decomposed into sectors, $\cH^{(\lambda)}$, that allow for the possibility of noiseless encoding and decoding of information.  Using Eq.~\eqref{3} the sectors $\cH^{(\lambda)}$ are the Hilbert spaces arising from the composition of two {\em virtual} quantum systems with corresponding state spaces $\cM^{(\lambda)},\,\cN^{(\lambda)}$~\cite{Z01}.  As the collective noise of the channel acts only on the subsystem associated with $\cM^{(\lambda)}$ this subsystem is a \defn{decoherence-full} subsystem. On the other hand, the subsystem associated with $\cN^{(\lambda)}$ is a \defn{decoherence-free}, or noiseless, subsystem.   The sector $\cH^{(\lambda)}$ is a decoherence-free \defn{subspace}~\cite{ZR98} if for any state $\ket{\psi^{(\lambda)}}\in\cH^{(\lambda)}$,
\begin{equation}
U^{(\lambda)}_{g_i}\ket{\psi^{(\lambda)}}=u^{(\lambda)}(g_i)\ket{\psi^{(\lambda)}}, \quad \forall g_i\in G,
\label{7}
\end{equation}
where $u^{(\lambda)}(g_i)\in\C$ with $\lvert u^{(\lambda)}(g_i)\rvert=1$.  It follows that $\cH^{(\lambda)}$ is a decoherence-free subspace if and only if the dimension of the decoherence-full subsystem, $\cM^{(\lambda)}$, is trivial.  In this case one makes use of the entire subspace, $\cH^{(\lambda)}$, to store and transmit quantum information.  If the decoherence-full subsystems are of non-trivial dimension then according to Eq.~\eqref{5} no state $\ket{\psi^{(\lambda)}}\in\cH^{(\lambda)}$ is invariant under collective noise.  However, the states $\ket{\beta}\in\cN^{(\lambda)}$, associated with the decoherence-free subsystem, are unaffected by the noise of the channel. In this case one makes use of a decoherence-free \defn{subsystem} to store and transmit quantum information. Henceforth, we abbreviate both decoherence-free subspaces and subsystems as DFS.

\subsection*{Example: $\mathrm{SU}(2)$}

We illustrate the use of a DFS for the most general type of collective noise on $N$, two-dimensional quantum systems that is associated with $\mathrm{SU}(2)$ (see appendix~\ref{appendsu2} for more details).  The state $\ket{j_1,m_1}\otimes\cdots\otimes\ket{j_N,m_N}$ can be written, using the familiar Clebsh-Gordan decomposition, as a linear superposition of the orthonormal basis states $\{\ket{J,M,\beta}\}$, where $J$ labels the total angular momentum, $M$ labels the projection of total angular momentum onto the $z$-axis, and $\beta$ labels the various ways $N$ spins can add to a particular total angular momentum $J$.  As we explain in appendix~\ref{appendsu2}, the smallest non-trivial DFS occurs for the case of three qubits.  Using the logical basis 
$
\ket{0_L}\equiv c_1\ket{J=\frac{1}{2},M=\frac{1}{2},\beta=0}+c_2\ket{J=\frac{1}{2},M=\frac{-1}{2},\beta=0},\\
\ket{1_L}\equiv d_1\ket{J=\frac{1}{2},M=\frac{1}{2},\beta=1}+d_2\ket{J=\frac{1}{2},M=\frac{-1}{2},\beta=1},
$
where $|c_1|^2+|c_2|^2=1$ and $|d_1|^2+|d_2|^2=1$, one logical qubit can be transmitted noiselessly through the channel. It follows that the rate of transmission of quantum information is $1/3$, i.e.~three physical qubits are required in order to transmit one logical qubit.

In the general case where $N$, $d$-dimensional systems are used several DFS exist.  However, as the action of collective noise of the channel destroys coherences between different irrep sectors, $\cH^{(J)}$, of the total Hilbert space only a single DFS can be used to transmit quantum data.  Hence, to achieve the maximum possible transmission rate the sender and receiver must utilize the DFS with the largest dimension. In the limit where $N\to\infty$ it is known that the rate of transmission of quantum data using the largest available DFS is $1-\mathcal{O}\left(\log_2(N)\right)/N$~\cite{KBLW01}.

To encode the logical qubit one needs to be able to perform the transformation that maps the tensor product basis, $\ket{j_1,m_1}\otimes\cdots\otimes\ket{j_N,m_N}$, to the block diagonal basis, $\ket{J,M,\beta}$. This basis transformation is accomplished by the Schur transform.  The latter is the matrix whose columns are $\ket{J,M,\beta}$.  It was also shown that the Schur transform, for encoding and decoding information in a DFS, can be efficiently implemented, up to an arbitrary error $\epsilon$, using a number of elementary gates that grows as $N\cdot\mathrm{poly}(\log(N),d,\log(\epsilon^{-1}))$, where $N$ is the total number of systems, and $d$ is their dimension~\cite{BCH06}. 

This is achieved by first performing the Clebsch-Gordan transformation described above on the first three systems, whose output is a linear superposition between the states of a $J=1/2$ system and a $J=3/2$ system.  A second application of the Clebsch-Gordan transformation, between two systems whose joint state is in a linear superposition of a $J=1/2$ system and a $J=3/2$ system, and the forth qubit results in a linear superposition between states of systems with $J=0,1,2$. Continuing this way, the final Clebsch-Gordan transformation couples the $N^{\mathrm{th}}$ qubit with two systems whose joint state is in a linear superposition of $0\leq J\leq \left(\frac{N-1}{2}\right)$. This results in a cascade of $N$ applications of the Clebsch-Gordan transformation mapping the uncoupled basis, $\ket{j_1,m_1}\otimes\ket{j_2,m_2}\otimes\ldots\otimes\ket{j_N,m_N}$, to the total angular momentum basis, $\ket{J,M,\beta}$.  The resulting quantum circuit implementing the Schur transform requires a total of $N\cdot\mathrm{poly}(\log(N),d,\log(\epsilon^{-1}))$ elementary gates in order to be implemented (see~\cite{BCH06b} for more details).

Before we introduce our protocol for transmitting quantum data through channels subject to collective noise, we will need a few useful tools from representation theory (see~\cite{Sternberg} for a good exposition on the subject), which we review here briefly.  Two elements, $g_i,g_k\in G$, are said to be \defn{conjugate} if there exists an element, $g_l\in G$, such that $g_l\cdot g_i\cdot g_l^{-1}=g_k$.   One can define an equivalence relation on the group $G$: each element within the set $[g_i]\equiv \{g_k\in G\;\mathrm{for\, which}\;\exists\, g_l\in G\;\mathrm{such\,that}\;g_k=g_l\cdot g_i\cdot g_l^{-1}\}$ is equivalent to $g_i$.  The set $[g_i]$ is called the \defn{conjugacy class} of $g_i$.  Furthermore, every element of $G$ belongs to one and only one conjugacy class.  Therefore, the conjugacy classes, $\{[g_1],\ldots,[g_s]\}$, form a partition of $G$. Here and in the following $s$ denotes the number of conjugacy classes. An important result from representation theory is that the number of inequivalent irreps of any finite or compact Lie group, $G$, is equal to the number of conjugacy classes of the group.

Another important concept we will use frequently is that of the character of a representation.  As we are concerned with channels associated with finite groups it suffices to consider only unitary representations, since every representation of a finite group is equivalent to a unitary representation~\cite{Sternberg}.  The latter are homomorphisms between $G$ and $\mathbb{U}(n)$, the group of $n\times n$ unitary matrices. It follows that the representation of $g_l\cdot g_i\cdot g_l^{-1}$ is of the form $U_{g_l}U_{g_i}U_{g_l}^{-1}$. The \defn{character} of $U_{g_i}$ is defined as $\chi_{g_i}\equiv\mathrm{tr}(U_{g_i})$. Since $\mathrm{tr}(U_{g_l}U_{g_i}U_{g_l}^{-1})=\mathrm{tr}(U_{g_i})$, all elements in the same conjugacy class, $[g_i]$, have the same character, which we denote by $\chi_{[g_i]}$. As there are $s$ conjugacy classes, $\{[g_1],\ldots,[g_s]\}$, the \defn{compound character}, $\chi$, of a representation, $U$, is an $s$-dimensional vector whose entries, $\chi_i$, are $\chi_{[g_i]}$.  The compound character of an irrep, $U^{(\lambda)}$, is denoted by $\chi^{(\lambda)}$ and is again an $s$-dimensional vector whose entries, $\chi^{(\lambda)}_i$, are $\chi^{(\lambda)}_{[g_i]}$.  The \defn{character table} for a finite group, $G$, is a convenient way to display the compound characters of all irreducible representations, $U^{(\lambda)}$ of $G$, where each row, indexed by the irrep label $\lambda$, in the character table contains the characters $\chi^{(\lambda)}_{[g_i]}$ for $i\in (1,\ldots, s)$. It is known that the characters of the irreps of a finite group satisfy the following orthogonality relation
\begin{equation}
\frac{1}{|G|}\sum_{i=1}^s \lvert[g_i]\rvert\chi^{(\lambda)}_i\chi^{(\lambda^\prime)*}_i=\delta_{\lambda,\lambda^\prime},
\label{13}
\end{equation}
where $\lvert[g_i]\rvert$ denotes the number of elements belonging to the conjugacy class $[g_i]$, and $\chi^{(\lambda^\prime)*}_i$ denotes the complex conjugate of $\chi^{(\lambda^\prime)}_i$.

Knowing $\chi$, and all $\chi^{(\lambda)}$ for a representation, $U$, of a finite group $G$ is sufficient to decompose $U$ into a direct sum of irreps as we now explain. Let $U$ be a representation of a group $G$ that has $s$ conjugacy classes.  Then by the linearity of the trace the compound character, $\chi$, of $U$ is given by
\begin{equation}
\chi=\sum_{\lambda=1}^{s} \alpha^{(\lambda)}\chi^{(\lambda)},
\label{14}
\end{equation}
where $\alpha^{(\lambda)}$ are positive integers corresponding to the multiplicities of the irreps $U^{(\lambda)}$.  Knowing the compound characters, $\chi^{(\lambda)}$, for all the irreps of a group $G$ and $|[g_i]|$ allows to determine all $\alpha^{(\lambda)}$'s in Eq.~\eqref{14} using Eq.~\eqref{13}.

One particular representation of importance in this work is the \defn{regular representation}, $\cR$, of a finite group $G$. Let $\cH$ be a $\lvert G\rvert$-dimensional Hilbert space and associate to every element, $g_i\in G$, one computational basis vector of $\cH$, which we denote by $\ket{g_i}$. In this basis, $\cR_{g_k}$, for any $ g_k\in G$, is a $\lvert G\rvert\times\lvert G\rvert$ permutation matrix which maps the set $\{\ket{g_i}\}_{g_i\in G}$ into itself. More precisely, we have
\begin{equation}
\cR_{g_k}\ket{g_i}=\ket{g_k\cdot g_i}\in\{\ket{g_l}\}_{g_l\in G},\forall g_k, g_i\in G,
\label{15}
\end{equation}
where $g_k\cdot g_i$ is the group product between the elements $g_k,\,g_i\in G$.  As the regular representation acts by permuting the computational basis vectors amongst themselves it follows that
\begin{equation}
\mathrm{tr}(\cR(g_i))=\left\{\begin{array}{l l}
                      |G| & \quad \mbox{if $g_i=e$}\\
                      0 & \quad \mbox{otherwise.}\\ \end{array} \right.
\label{16}
\end{equation}
Thus, the character of the regular representation is a vector whose first entry is $|G|$ and the rest are zero.  It can be easily shown, using Eq (\ref{13}) and the fact that $|[e]|=1$, that the regular representation contains every irrep, $U^{(\lambda)}$, a number of times equal to the dimension, $d_\lambda$, of $U^{(\lambda)}$.  That is
\begin{equation}
\cR_{g_i}=\bigoplus_\lambda d_\lambda U^{(\lambda)}_{g_i}.
\label{17}
\end{equation}
Computing the characters on both sides of Eq.~\eqref{17} gives a useful relation between the order of the group and the dimensions of the irreps:
\begin{align}\nonumber
\chi(\cR_e)=|G|&=\sum_\lambda d_\lambda \chi^{(\lambda)}_e\\
&=\sum_\lambda d_\lambda^2.
\label{18}
\end{align}

\section{\label{sec:3}The protocol}
In this section we describe our communication protocol where we consider collective noise described by  an arbitrary discrete group, $G$, with a number of elements equal to the order, $\lvert G\rvert$, of the group (Sec.~\ref{theory}). We show how to encode and decode $m$ logical qudits using $m+r$ physical qudits prepared in a DFS. We use $r$ auxiliary qudits to construct a set of perfectly distinguishable states in such a way that the collective noise maps these states into each other. In order to encode $m$ logical qudits of quantum information an entangled state is prepared, between the $r$ auxiliary qudits and the $m$ logical qudits, that remains invariant under collective noise. The decoding procedure consists of a unitary correction on the $m$ logical qudits which is conditioned on the outcome of a projective measurement on the first $r$ auxiliary qudits.  In Sec.~\ref{Examples}, we show how our protocol is implemented for a few examples of collective noise channels.  

\subsection{\label{theory}Encoding and decoding of quantum data.}

We begin by considering a quantum channel whose noise is described by a unitary representation, $U$, of a finite group, $G$, acting on the Hilbert space, $\cH_d$.  We say that the representation, $U$, of a finite group, $G$, is isomorphic to $G$ if there is a one-to-one correspondence between the elements, $g_i\in G$, and the matrix representation of these elements, $U_{g_i}$.

For an $r$-qudit state $\ket{\psi}\in\cH_d^{\otimes r}$ and a representation $U$, we define the set of states $S^{(U)}_{(r,\ket{\psi})}\equiv\{\ket{\psi(g_i)}=U_{g_i}^{\otimes r}\ket{\psi},\,g_i\in G\}$. We will be interested in those sets $S^{(U)}_{(r,\ket{\psi})}$ for which the following two conditions are fulfilled. For any pair $g_i,\,g_k\in G$, it holds that:
\begin{enumerate}
\item[(1)] the states $\ket{\psi(g_i)},\,\ket{\psi(g_k)}\in S^{(U)}_{(r,\ket{\psi})}$ are mutually orthogonal, i.e.~$\bracket{\psi(g_i)}{\psi(g_k)}=\delta_{ik}$,
\item[(2)] $U_{g_k}^{\otimes r}\ket{\psi(g_i)}=\ket{\psi(g_k\cdot g_i)}\in S^{(U)}_{(r,\ket{\psi})}$, where $g_k\cdot g_i$ denotes the group product between $g_i,g_k\in G$.  In particular, the set $S^{(U)}_{(r,\ket{\psi})}$ is closed under the action of $U^{\otimes r}$.
\end{enumerate}
We refer to a set, $S^{(U)}_{(r,\ket{\psi})}$, fulfilling both conditions as the set of {\em token states}.
We now show how the existence of a set of token states allows the construction of a protocol for error-free transmission of quantum information through a noisy channel. 

Let the noise of the channel be described by the set of operators $\{U_{g_i},\,g_i\in G\}$. Let us assume that there exists an integer, $r$, and a state, $\ket{\psi}\in\cH_d^{\otimes r}$, such that $S^{(U)}_{(r,\ket{\psi})}$ fulfills condition $(1)$ and $(2)$. The sender, Alice, wishes to transmit a state, $\ket{\phi}\in\cH_d^{\otimes m}$, to the receiver, Bob. To that end Alice prepares $m+r$ qudits in the state
\begin{equation}
\ket{\chi_{\phi}}=\frac{1}{\sqrt{|G|}}\sum_{g_i\in G} \ket{\psi(g_i)}\otimes (U_{g_i}^{\otimes m}\ket{\phi}),
\label{20}
\end{equation}
with $\ket{\psi(g_i)}\in S^{(U)}_{(r,\ket{\psi})}$.
Sending the $r+m$ qudits prepared in the state of Eq.~\eqref{20} through the channel yields
\begin{eqnarray}\nonumber
\ket{\chi'_{\phi}}&=&U_{g_k}^{\otimes (r+m)}\ket{\chi_{\phi}} \\   \nonumber
&=&\frac{1}{\sqrt{|G|}}\sum_{g_i\in G} U_{g_k}^{\otimes r}\ket{\psi(g_i)}\otimes (U_{g_k}U_{g_i})^{\otimes m}\ket{\phi}),\\
\label{21}
\end{eqnarray}
for some $g_k\in G$. Since $U^{\otimes r}_{g_k}\ket{\psi(g_i)}=\ket{\psi(g_k\cdot g_i)}$ (condition (2)), and $U_{g_k}U_{g_i}=U_{g_k\cdot g_i}=U_{g_l}$, for some $l\in \{0,\ldots, |G|-1\}$, Eq.~\eqref{21} yields
\begin{align}\nonumber
U_{g_k}^{\otimes (r+m)}\ket{\chi_{\phi}}&=\frac{1}{\sqrt{|G|}}\sum_{g_l\in G} \ket{\psi(g_l)}\otimes (U_{g_l}^{\otimes m}\ket{\phi})\\
&=\ket{\chi_\phi}.
\label{22}
\end{align}
Thus, Bob receives the same state $\ket{\chi_{\phi}}$. Therefore, the state of Eq.~\eqref{20} lies in a DFS as it is invariant under the action $U_{g_k}^{\otimes(m+r)}$ for all $g_k\in G$. As we will show below, the number of auxiliary systems, $r$, required for the construction of the token states is independent of the number of logical qudits, $m$.

To decode the quantum data Bob performs the measurement described by $\{A_i=\ketbra{\psi(g_i)}\, \mathrm{for}\; i\in (0,\ldots, \lvert G\rvert-1), \,A_\perp=I-\sum_{i=0}^{\lvert G\rvert-1} A_i\}$ on the first $r$ qudits and obtains outcome $i\in (0,\ldots,\lvert G\rvert-1)$. Note that outcome  $A_\perp$ has zero probability of occurring. Conditioned on the outcome, $i$, the unitary $U_{g_i^{-1}}^{\otimes m}$ is performed on the remaining $m$ qudits to retrieve $\ket{\phi}\in\cH_d^{\otimes m}$. Unlike error-correcting codes, no information about the error is obtained, only about which unitary transformation is needed in order to retrieve the message.

We also note that our protocol is naturally suited to a communication scenario involving multiple receivers.  Suppose Alice wishes to distribute the state $\ket{\phi}\in\cH_d^{\otimes m}$ to multiple parties through communication channels whose noise is described by $\{U_{g_i},\, g_i\in G\}$, where the same operation, $U_{g_i}$, is applied on all transmitted qudits. Alice prepares the state in Eq.~\eqref{20}, and sends the first $r$ qudits to a single receiver while distributing the remaining $m$ qudits amongst multiple receivers.  The party holding the first $r$ qudits performs the collective measurement described above, and communicates the outcome to the remaining parties who can each apply the appropriate correction locally on their individual systems.

Alternatively, Alice and Bob can perform a measure and re-align protocol~\cite{BRST09}. Alice prepares the state $\ket{\chi_\phi}=\ket{\psi(g_i)}\otimes U_{g_i}^{\otimes m}\ket{\phi}$, for some $g_i\in G$, and sends this state through the channel to Bob. Bob's description of the state sent to him by Alice is given by the ensemble of states $\{p_k,\, \ket{\psi(g_k)}\otimes U_{g_k}^{\otimes m}\ket{\phi}\}$, where $g_k\in G$.  By performing the same measurement described above on the first $r$ qudits, Bob obtains the outcome $i$ and performs the correction, $U_{g_i^{-1}}$, on the remaining $m$ qudits to retrieve the state $\ket{\phi}$ with certainty. Since two parties sharing a collective noise channel, associated with a finite group, $G$, is equivalent to two parties sharing a perfect quantum channel but lacking a shared frame of reference associated with the finite group $G$, the protocol described in this paragraph is equivalent to a reference frame alignment protocol~\cite{AJV01, BBM04b, CDPS04a}.

Indeed, it is known that for reference frame alignment protocols associated with finite groups there exist states, prepared by Alice, and measurements, performed by Bob, that allow the two parties to perfectly align their respective frames of reference~\cite{SG12}. However, the sender and receiver might wish for their respective reference frames to remain hidden, as a party's reference frame might serve as a way of identifying themselves to some third party.  Furthermore, there are instances where it is advantageous to maintain the noise of a quantum channel, as noisy channels can improve the performance of certain quantum information primitives such as bit commitment~\cite{SR01,HOT06}.  For these cases, it is beneficial for Alice and Bob to utilize a communication protocol that does not allow either party to learn about the other party's reference frame (or equivalently about the action of the channel).

In this paper our goal is to construct an error-avoiding protocol whose implementation, in terms of elementary gates, is more efficient than the best currently known protocols utilizing a DFS. Therefore, we will focus mainly on the DFS protocol outlined above (Eq.~\eqref{20}). In determining the number of gates required to implement the DFS protocol (Sec.~\ref{sec:4}) we will also determine an upper bound on the number of gates required to perform the measure and re-align protocol.

We now show how the set of token states, $S^{(U)}_{(r,\ket{\psi})}$, can be constructed using the following theorem, which we prove in Appendix~\ref{append1}. 
\newline
\begin{theorem}
Let $U$ be a representation of some finite group, $G$, on a Hilbert space, $\cH_d$, that is isomorphic to $G$.  Then there exists an integer, $r$, and a state $\ket{\psi}\in\cH_d^{\otimes r}$, such that $S^{(U)}_{(r,\ket{\psi})}$ satisfies both conditions (1) and (2), if and only if $U^{\otimes r}$ contains the regular representation, $\cR$, of $G$.
\label{thm:1}
\end{theorem}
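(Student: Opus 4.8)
The plan is to read conditions (1)--(2) as the single statement that the assignment $\ket{g_i}\mapsto\ket{\psi(g_i)}$ is an isometric intertwiner from the regular representation into $U^{\otimes r}$, and then to invoke complete reducibility and Schur's lemma to turn ``isometric intertwiner exists'' into ``$U^{\otimes r}$ contains $\cR$''. First I would observe that condition (2) has no content for a genuine representation: since $U_{g_k}^{\otimes r}\ket{\psi(g_i)}=U_{g_k}^{\otimes r}U_{g_i}^{\otimes r}\ket{\psi}=U_{g_k\cdot g_i}^{\otimes r}\ket{\psi}=\ket{\psi(g_k\cdot g_i)}$ automatically, the entire token-state property reduces to the orthonormality in condition~(1). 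Writing the Gram matrix $\Gamma_{ik}=\bracket{\psi(g_i)}{\psi(g_k)}=\bra{\psi}U_{g_i^{-1}\cdot g_k}^{\otimes r}\ket{\psi}$ (using that $U$ is unitary), condition~(1) is equivalent to $\bra{\psi}U_h^{\otimes r}\ket{\psi}=\delta_{h,e}$ for every $h\in G$, i.e.\ the orbit of $\ket{\psi}$ under $U^{\otimes r}$ is orthonormal.

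Next I would cast this representation-theoretically. Let $\cR$ act on $\C^{|G|}$ with orthonormal basis $\{\ket{g_i}\}$ as in Eq.~\eqref{15}, and define $W:\C^{|G|}\to\cH_d^{\otimes r}$ by $W\ket{g_i}=\ket{\psi(g_i)}$. The automatic condition (2) says exactly that $W$ intertwines the representations, $U_{g_k}^{\otimes r}W=W\cR_{g_k}$ for all $g_k$, while condition (1) says exactly that $W$ is an isometry. Hence a set of token states in $\cH_d^{\otimes r}$ exists if and only if there is an isometric intertwiner $\cR\hookrightarrow U^{\otimes r}$. For the forward (``only if'') direction I would then note that the image $W(\C^{|G|})$ is a $U^{\otimes r}$-invariant subspace on which $U^{\otimes r}$ acts unitarily equivalently to $\cR$; recalling $\cR=\bigoplus_\lambda d_\lambda U^{(\lambda)}$ from Eq.~\eqref{17}, this is precisely the statement that $U^{\otimes r}$ contains the regular representation.

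For the converse I would start from the hypothesis that $U^{\otimes r}$ contains $\cR$, which by complete reducibility (Maschke) and Eq.~\eqref{17} means every irrep $U^{(\lambda)}$ appears in $U^{\otimes r}$ with multiplicity at least $d_\lambda$, yielding an invariant subspace $V\subseteq\cH_d^{\otimes r}$ and an invertible intertwiner $A:\C^{|G|}\to V$. The one step requiring care---and the main obstacle---is that $A$ need not be isometric, so I would upgrade it via its polar decomposition: because both representations are unitary, a short computation shows $A^\dagger A$ commutes with every $\cR_{g_k}$, hence so does $(A^\dagger A)^{-1/2}$, and the polar factor $W:=A(A^\dagger A)^{-1/2}$ is an isometry that still intertwines (this is just the fact that equivalent unitary representations are unitarily equivalent). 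Setting $\ket{\psi}:=W\ket{e}$ then gives $\ket{\psi(g_i)}=U_{g_i}^{\otimes r}W\ket{e}=W\cR_{g_i}\ket{e}=W\ket{g_i}$, so $\{\ket{\psi(g_i)}\}$ is the image of an orthonormal basis under an isometry and is therefore orthonormal, establishing condition~(1) (and condition (2) for free). Finally, I would remark that the faithfulness hypothesis ($U$ isomorphic to $G$) is what makes the equivalence non-vacuous: distinct group elements must yield distinct, orthogonal $\ket{\psi(g_i)}$, and faithfulness guarantees by the standard argument that for $r$ large enough $U^{\otimes r}$ does contain every irrep with the required multiplicity, so the two existential quantifiers over $r$ match.
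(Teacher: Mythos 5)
Your proof is correct, and its forward (``only if'') direction is genuinely different from the paper's. In the backward direction the two arguments essentially coincide: the paper (Appendix~\ref{append1}) takes the token states to be the computational basis of $\cH_{\cR}$ inside $\cH_d^{\otimes r}$, which is your $W\ket{g_i}$ with $W$ already isometric; your polar-decomposition step $W=A(A^\dagger A)^{-1/2}$ is a slightly more careful version that upgrades an arbitrary invertible intertwiner, and it is valid since $A^\dagger A$ commutes with every $\cR_{g_k}$. The real divergence is in the forward direction: the paper forms the group average $P_G=\frac{1}{|G|}\sum_{g_i}\ketbra{\psi(g_i)}$, expands it block by block via the Schmidt decomposition of $\Pi^{(\lambda)}\ket{\psi}$, and counts ranks against $|G|=\sum_\lambda d_\lambda^2$ to force each irrep to appear with multiplicity at least $d_\lambda$; you instead observe that conditions (1) and (2) together say exactly that $\ket{g_i}\mapsto\ket{\psi(g_i)}$ extends to an isometric intertwiner $\C^{|G|}\to\cH_d^{\otimes r}$, whose image is an invariant subspace carrying a copy of $\cR$. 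Your route is shorter and more conceptual, and your preliminary observation that condition (2) is automatic for an honest homomorphism --- so that the whole content is $\bra{\psi}U_h^{\otimes r}\ket{\psi}=\delta_{h,e}$ --- is a clean reformulation the paper never makes explicit (note it would fail for projective representations, where closure acquires phases, cf.\ the sign in front of $\ket{1-}$ in the Pauli example; but that lies outside the theorem's hypotheses). What the paper's longer computation buys is the structural information it needs afterwards: continuing the same trace and purity analysis yields $|c_\lambda|^2=d_\lambda^2/|G|$ and $\mu_n^{(\lambda)}=d_\lambda^{-1/2}$, i.e.\ the explicit fiducial state of Eq.~\eqref{19} used throughout the protocol, which your argument does not produce directly.
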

Using Theorem~\ref{thm:1}, the state $\ket{\psi}\in\cH_d^{\otimes r}$ can then be chosen to be
\begin{equation}
\ket{\psi}=\sum_\lambda\sqrt{\frac{d_\lambda}{\lvert G\rvert}}\sum_{n=1}^{d_\lambda}\ket{\xi^{(\lambda)}_n}\otimes\ket{\zeta^{(\lambda)}_n},
\label{19}
\end{equation}
where the sum is taken over all irreps. The set $\left\lbrace\ket{\xi^{(\lambda)}_n}\right\rbrace$ $\left(\left\lbrace\ket{\zeta^{(\lambda)}_n}\right\rbrace\right)$ denotes an orthonormal basis of $\cM^{(\lambda)}$ $\left(\cN^{(\lambda)}\right)$, and $d_\lambda=\mathrm{dim}(\cM^{(\lambda)})$.  We note that the state of Eq.~\eqref{19} was shown to optimize the maximum likelihood of a correct guess in a reference frame alignment protocol~\cite{CDPSM04b,CDPS06}.  Recall that $\cM^{(\lambda)}$ denotes the space on which the irrep, $U^{(\lambda)}$, of $G$ acts non-trivially, whereas $\cN^{(\lambda)}$ denotes the space on which $U^{\otimes r}$ acts trivially. Note that, since $U^{\otimes r}$ contains the regular representation, the dimension of the multiplicity space, $\cN^{(\lambda)}$, is at least $d_\lambda$. We will show in Appendix~\ref{append1} that it is always possible to choose $r$ sufficiently large (and independent of $m$) such that there exists a state $\ket{\psi}\in\cH_d^{\otimes r}$ so that $S^{(U)}_{(r,\ket{\psi})}$ fulfills condition (1) and (2).

\subsection{Examples \label{Examples}}
In this subsection we explicitly construct the set of token states, $S^{(U)}_{(r,\ket{\psi})}$, and the state $\ket{\chi_\phi}$ of Eq.~\eqref{20}, for the case of the Pauli channel, and for the cyclic groups $\mathbb{Z}_3$ and $\mathbb{Z}_N$.

\subsubsection{The Pauli channel}
Suppose that Alice and Bob share a \defn{Pauli} channel, i.e.~a channel whose noise is described by the set of operators $\{e,x,y,z\}\equiv \{I,\sx,i\sy, \sz\}$. Using the protocol above, we need to construct a set of four token states, one for each element of the Pauli group, that are orthonormal and closed under the action of the Pauli operators.  Thus, we require that $r\geq 2$.   Infact, $r=2$ suffices as we now prove.  The set of tensor products of Pauli operators, $\{I^{\otimes 2}, \sx^{\otimes 2},(i\sy)^{\otimes 2},\sz^{\otimes 2}\}$, forms a four-dimensional representation of the Klein group $K_4$---the smallest, non-cyclic, finite abelian group. The character table for the inequivalent irreps, $U^{(\lambda)}$, of $K_4$ is shown in Table~\ref{tbl1}.
\begin{table}[htb]
\caption{The character table for $K_4$}
\label{tbl1}
\begin{tabular}{c|c|c|c|c}
$\chi$&$[g_0$=e]&$[g_1$=x]&$[g_2$=y]&$[g_3$=z]\\
\hline
$U^{(0)}$&1&1&1&1\\
\hline
$U^{(1)}$&1&1&-1&-1\\
\hline
$U^{(2)}$&1&-1&1&-1\\
\hline
$U^{(3)}$&1&-1&-1&1
\end{tabular}
\end{table}
As all the irreps of the Klein group are one-dimensional~\cite{Sternberg}, computing the compound character, $\chi$, of $\{U_{g_i}^{\otimes 2},\, g_i\in K_4\}$, one finds that $\chi=(4,0,0,0)$.  As the latter is the character of the regular representation, $\cR=U^{(0)}\oplus U^{(1)}\oplus U^{(2)}\oplus U^{(3)}$, of $K_4$ given by
\begin{align}\nonumber
\cR_{g_0}=\left(\begin{matrix} 1&0&0&0\\0&1&0&0\\0&0&1&0\\0&0&0&1\end{matrix}\right),\, \cR_{g_1}=\left(\begin{matrix} 1&0&0&0\\0&1&0&0\\0&0&-1&0\\0&0&0&-1\end{matrix}\right),\\
\cR_{g_2}=\left(\begin{matrix} 1&0&0&0\\0&-1&0&0\\0&0&1&0\\0&0&0&-1\end{matrix}\right),\,
\cR_{g_3}=\left(\begin{matrix} 1&0&0&0\\0&-1&0&0\\0&0&-1&0\\0&0&0&1\end{matrix}\right),
\end{align}
it follows that the collective action of the Pauli operators on two qubits is equivalent to the regular representation of $K_4$.

We now construct the fiducial state, $\ket{\psi}\in\cH_2^{\otimes 2}$, of Eq.~\eqref{19}. Consider the action of $U_{g_i}^{\otimes 2}$ on the state $\ket{\Phi^+}=(\ket{00}+\ket{11})/\sqrt{2}$, where $\ket{ab}\equiv\ket{a}\otimes\ket{b}$.  For any $g_i\in K_4$ it holds that $U_{g_i}^{\otimes 2}\ket{\Phi^+}=\ket{\Phi^+}$. Hence, the state $\ket{\Phi^+}$ belongs to the space $\cH^{(0)}$ on which the trivial representation of $K_4$ acts.  Similarly, it can be shown that $\{\ket{\Psi^+}=(\ket{01}+\ket{10})/\sqrt{2}, \,  \ket{\Psi^-}=(\ket{01}-\ket{10})/\sqrt{2},\, \ket{\Phi^-}=(\ket{00}-\ket{11})/\sqrt{2}\}$ belong in the spaces $\cH^{(1)}, \cH^{(2)}, \cH^{(3)}$ respectively.
Thus, using Eq.~\eqref{19}, $\ket{\psi}\in\cH_2^{\otimes 2}$ is given by
\begin{equation}
\ket{\psi}=\frac{\ket{\Phi^+}+\ket{\Psi^+}+\ket{\Psi^-}+\ket{\Phi^-}}{2}=\ket{0+},
\label{23}
\end{equation}
where $\ket{\pm}=(\ket{0}\pm\ket{1})/\sqrt{2}$. It follows that the set of token states is given by $S^{(U)}_{(2,\ket{0+})}\equiv\{\ket{0+}, \ket{1+},-\ket{1-}, \ket{0-}\}$, corresponding to $\{\ket{\psi(g_0)},\ket{\psi(g_1)}, \ket{\psi(g_2)},\ket{\psi(g_3)}\}$ respectively~\footnote{The negative sign in front of the state $\ket{1-}$ is needed to ensure that the state $\ket{\chi_\phi}$ remains in a DFS.}. Notice that the set of token states is perfectly distinguishable and closed under the action of $\{U_{g_i}^{\otimes 2},\, g_i\in K_4\}$.

Alice wants to send a quantum state, $\ket{\phi}\in \cH_2^{\otimes m}$, to Bob. Following the protocol in Sec.~\ref{theory}, Alice encodes her $m$-partite quantum state by preparing
\begin{align}\nonumber
\ket{\chi_{\phi}}&=\frac{1}{2}\sum_{g_i\in K_4}\ket{\psi(g_i)}\otimes U_{g_i}^{\otimes m}\ket{\phi}\\ \nonumber
&=\frac{1}{2}\left(\ket{0+}\otimes \ket{\phi}+\ket{1+}\otimes\sx^{\otimes m}\ket{\phi}\right.\\
&\left.-\ket{1-}\otimes (i\sy)^{\otimes m}\ket{\phi}+\ket{0-}\otimes\sz^{\otimes m}\ket{\phi}\right).
\label{24}
\end{align}
The state in Eq.~\eqref{24} is invariant under the action  of $\sx^{\otimes(2+m)}$ as
\begin{align}\nonumber
\sx^{\otimes (2+m)}\ket{\chi_\phi}&=\frac{1}{2}\left(\sx^{\otimes 2}\ket{0+}\otimes (\sx\cdot I)^{\otimes m}\ket{\phi}\right.\\ \nonumber
&\left.+\sx^{\otimes 2}\ket{1+}\otimes (\sx\cdot \sx)^{\otimes m}\ket{\phi}\right.\\ \nonumber
&\left.-\sx^{\otimes 2}\ket{1-}\otimes (\sx\cdot i\sy)^{\otimes m}\ket{\phi}\right.\\
&\left.+\sx^{\otimes 2}\ket{0-}\otimes (\sx\cdot \sz)^{\otimes m}\ket{\phi}\right)
\label{25}
\end{align}
which gives
\begin{align}\nonumber
\sx^{\otimes (2+m)}\ket{\chi_\phi}&=\frac{1}{2}\left(\ket{1+}\otimes \sx^{\otimes m}\ket{\phi}+\ket{0+}\otimes \ket{\phi}\right.\\ \nonumber
&\left.+\ket{0-}\otimes \sz^{\otimes m}\ket{\phi}-\ket{1-}\otimes (i\sy)^{\otimes m}\ket{\phi}\right)\\
&=\ket{\chi_\phi}.
\label{26}
\end{align}
A similar calculation shows that  the state $\ket{\chi_\phi}$ is invariant under $U_{g_i}^{\otimes 2+m}$ for all $g_i\in K_4$.  Thus, if Alice sends $2+m$ qubits, prepared in the state $\ket{\chi_\phi}$, through the channel, Bob will receive $2+m$ qubits in the state $\ket{\chi_\phi}$.  Note that this is true for any probability distribution $\{p_{g_i},\, g_i\in (e,x,y,z)\}$.

To decode the quantum data Bob simply performs the measurement $\{A_i=\ketbra{\psi(g_i)},\, i\in(0,1,2,3)\}$ on the first two qubits. Upon obtaining outcome $i$, Bob simply applies $U_{g_i^{-1}}^{\otimes m}$ on the remaining $m$ qubits and retrieves the state $\ket{\phi}\in\cH_2^{\otimes m}$. Notice that the measurement corresponds to single qubit measurements in the $z$ and $x$ basis on auxiliary qubit $1$ and $2$ respectively, and that the required correction operations are local.

\subsubsection{Channel associated with the discrete cyclic group $\mathbb{Z}_3$}
For our second example we consider the case where the noise of the channel is given by the two-dimensional representation of $\mathbb{Z}_3$, the cyclic group of three elements,
\begin{equation}
U_{g_i}=\sum_{n=0}^{1}\omega^{ng_i}\ketbra{n}\quad g_i\in(0,1,2),
\label{27}
\end{equation}
where $\omega=e^{i2\pi/3}$.  The character table for $\mathbb{Z}_3$ is given in Table.~\ref{tbl2}.
\begin{table}[htb]
\caption{The character table for $\mathbb{Z}_3$}
\label{tbl2}
\begin{tabular}{c|c|c|c}
$\chi$&$[g_0]$&$[g_1]$&$[g_2]$\\
\hline
$U^{(0)}$&1&1&1\\
\hline
$U^{(1)}$&1&$\omega$&$\omega^2$\\
\hline
$U^{(2)}$&1&$\omega^2$&$\omega$
\end{tabular}
\end{table}
As $\mathbb{Z}_3$ is abelian, all its irreps are one dimensional and are given by $U_{g_i}^{(\lambda)}=\omega^{\lambda g_i}$.  It follows that the regular representation, $\cR$, of $\mathbb{Z}_3$ is given by
\begin{equation}
\cR_{g_i}=U^{(0)}_{g_i}\oplus U^{(1)}_{g_i}\oplus U^{(2)}_{g_i}=\left(\begin{matrix}1&0&0\\0&\omega^{g_i}&0\\0&0&\omega^{2g_i}\end{matrix}\right).
\label{28}
\end{equation}

Using Eq.~\eqref{27},  $\{U_{g_i}^{\otimes 2},\, g_i\in\mathbb{Z}_3\}$ is given by
\begin{equation}
U_{g_i}^{\otimes 2}=\sum_{n_1,n_2=0}^1\omega^{(n_1+n_2)g_i}\ketbra{n_1,n_2}\quad g_i\in(0,1,2),
\label{29}
\end{equation}
where $n_1$ and $n_2$ are added modulo three. Let us denote by $\{\ket{\lambda,\beta}\}_{\beta=1}^{\alpha^{(\lambda)}}$ the set of orthogonal states $\{\ket{n_1,n_2}\}_{(n_1+n_2){\mathrm{mod}\,3}=\lambda}$. Here,
 $\alpha^{(\lambda)}=\binom{2}{\lambda}$ denotes the number of states corresponding to the same $\lambda$. Then Eq.~\eqref{29} can be written as
\begin{align}\nonumber
U_{g_i}^{\otimes 2}&=\sum_{\lambda=0}^2\sum_{\beta=1}^{\alpha^{(\lambda)}}\omega^{\lambda g_i}\ketbra{\lambda,\beta}\\
&=\bigoplus_{\lambda=0}^2 U_{g_i}^{(\lambda)}\otimes I_{\alpha^{(\lambda)}}, \,\forall g_i\in(0,1,2),
\label{30}
\end{align}
where $I_{\alpha^{(\lambda)}}$ is the ${\alpha^{(\lambda)}}$-dimensional identity.  It follows that $\ket{0}\equiv\ket{00}\in\cH^{(0)}$, the subspace upon which $U^{(0)}$ acts, $\ket{2}\equiv\ket{11}\in\cH^{(2)}$, the subspace upon which $U^{(2)}$ acts, and $\{\ket{1,1}\equiv\ket{01},\ket{1,2}\equiv\ket{10}\}\in\cH^{(1)}$, the subspace upon which $U^{(1)}$ acts. Since every inequivalent irrep of $\mathbb{Z}_3$ is present in $\{U_{g_i}^{\otimes 2},\, g_i\in\mathbb{Z}_3\}$ the latter contains the regular representation as a sub-representation.

We now construct the set of token states for such a channel.  The subspace $\cH^{(1)}$, upon which irrep $U^{(1)}$ acts, is a two-dimensional DFS. However, one state from this subspace is sufficient to construct our token states since for all abelian groups $d_\lambda=\mathrm{dim}(\cM^{(\lambda)})=1$ in Eq.~\eqref{19}.  Hence, we can choose $\ket{1,1}=\ket{01}$ as our standard state from the $\cH^{(1)}$ subspace. Thus, the state $\ket{\psi}\in\cH_2^{\otimes 2}$ of  Eq.~\eqref{19} reads
\begin{equation}
\ket{\psi}=\frac{1}{\sqrt{3}}\left(\ket{00}+\ket{01}+\ket{11}\right)
\label{31}
\end{equation}
or, in the $\{\ket{\lambda,\beta}\}$ basis,
\begin{equation}
\ket{\psi}=\frac{1}{\sqrt{3}}\left(\ket{0}+\ket{1,1}+\ket{2}\right).
\label{31a}
\end{equation}
In what follows we explicitly use Eq.~\eqref{31} but of course the same reasoning would hold if one uses Eq.~\eqref{31a} instead. The set of token states is given by $S^{(U)}_{(2,\ket{\psi})}\equiv\{\ket{\psi(g_i)}=\frac{1}{\sqrt{3}}\left(\ket{00}+\omega^{g_i}\ket{01}+\omega^{2g_i}\ket{11}\right), \, g_i\in(0,1,2)\}$.

To communicate an arbitrary $m$-partite state, $\ket{\phi}\in\cH_2^{\otimes m}$, Alice prepares the state
\begin{align}\nonumber
\ket{\chi_{\phi}}&=\frac{1}{\sqrt{3}}\sum_{g_i\in\mathbb{Z}_3}\ket{\psi(g_i)}\otimes U_{g_i}^{\otimes m}\ket{\phi}\\ \nonumber
&=\frac{1}{3}\left((\ket{00}+\ket{01}+\ket{11})\otimes U_{g_0}^{\otimes m}\ket{\phi}\right.\\ \nonumber
&\left.+(\ket{00}+\omega\ket{01}+\omega^2\ket{11})\otimes U_{g_1}^{\otimes m}\ket{\phi}\right.\\
&\left.+(\ket{00}+\omega^2\ket{01}+\omega\ket{11})\otimes U_{g_2}^{\otimes m}\ket{\phi}\right).
\label{32}
\end{align}
Suppose that the channel performs $U_{g_2}$ on all the qubits.  Then
\begin{align}\nonumber
U_{g_2}^{\otimes(2+m)}&\ket{\chi_{\phi}}=\frac{1}{3}\left(U_{g_2}^{\otimes 2}(\ket{00}+\ket{01}+\ket{11})\right.\\ \nonumber
&\left.\otimes (U_{g_2}U_{g_0})^{\otimes m}\ket{\phi}+U_{g_2}^{\otimes 2}(\ket{00}+\omega\ket{01}+\omega^2\ket{11})\right.\\ \nonumber
&\left.\otimes (U_{g_2}U_{g_1})^{\otimes m}\ket{\phi}+U_{g_2}^{\otimes 2}(\ket{00}+\omega^2\ket{01}+\omega\ket{11})\right.\\
&\left.\otimes (U_{g_2}U_{g_2})^{\otimes m}\ket{\phi}\right).
\label{33}
\end{align}
As representations are homomorphisms, $U_{g_k}U_{g_i}=U_{g_{k+i}}$, Eq.~\eqref{33} gives
\begin{align}\nonumber
U_{g_2}^{\otimes(2+m)}\ket{\chi_{\phi}}&=\frac{1}{3}\left((\ket{00}+\omega^2\ket{01}+\omega\ket{11})\otimes U_{g_2}^{\otimes m}\ket{\phi}\right.\\ \nonumber
&\left.+(\ket{00}+\ket{01}+\ket{11})\otimes U_{g_0}^{\otimes m}\ket{\phi}\right.\\ \nonumber
&\left.+(\ket{00}+\omega\ket{01}+\omega^2\ket{11})\otimes U_{g_1}^{\otimes m}\ket{\phi}\right)\\
&=\ket{\chi_\phi}.
\label{34}
\end{align}
A similar calculation shows that the state $\ket{\chi_\phi}$ is invariant for all $g_i\in\mathbb{Z}_3$.  Thus, if Alice sends $2+m$ qubits, prepared in the state $\ket{\chi_\phi}$, through the channel Bob will receive $2+m$ qubits in the state $\ket{\chi_{\phi}}$.  Note that this is true for any probability distribution $\{p_{g_i},\, g_i\in\mathbb{Z}_3\}$.

To decode the state $\ket{\phi}\in\cH_2^{\otimes m}$, Bob performs the measurement $\{A_i=\ketbra{\psi(g_i)},\, \mathrm{for}\; i\in (0,1,2),\, A_\perp=I-\sum_i A_i \}$ on the first two qubits.  Note that in this example $A_\perp=\ketbra{10}$ so that the set of measurements is complete on $\cH_2^{\otimes 2}$, and that the probability of obtaining this measurement outcome is zero.  Upon obtaining outcome $i$, Bob implements $U_{g_i^{-1}}^{\otimes m}$ on the remaining $m$ qubits and retrieves $\ket{\phi}\in\cH_2^{\otimes m}$.

\subsubsection{Channel associated with the discrete cyclic group  $\mathbb{Z}_N$}
The above example can be easily generalized to the case where the channel's noise is associated to the group $\mathbb{Z}_N$, the cyclic group of $N$ elements, with its action on a $d$-dimensional Hilbert space, $\cH_d$, given by
\begin{equation}
U_{g_i}=\sum_{n=0}^{d-1}\omega^{ng_i}\ketbra{n},\quad g_i\in(0,\ldots, N-1),
\label{35}
\end{equation}
where $\omega=e^{i2\pi/N}$. Notice that here $N$ denotes the number of elements of the group. The character table for $\mathbb{Z}_N$ is given in Table.~\ref{tbl3}.
\begin{table}[htb]
\caption{The character table for $\mathbb{Z}_N$}
\label{tbl3}
\begin{tabular}{c|c|c|c|c}
$\chi$&$[g_0]$&$[g_1]$&$\ldots$&$[g_{N-1}]$\\
\hline
$U^{(0)}$&1&1&$\ldots$&1\\
\hline
$U^{(1)}$&1&$\omega$&$\ldots$&$\omega^{N-1}$\\
\hline
$\vdots$&$\vdots$&$\vdots$&$\ddots$&$\vdots$\\
\hline
$U^{(N-1)}$&1&$\omega^{N-1}$&$\ldots$&$\omega^{(N-1)^2}$
\end{tabular}
\end{table}
As $\mathbb{Z}_N$ is abelian all its inequivalent irreps are one-dimensional and are given by $U_{g_i}^{(\lambda)}=\omega^{\lambda g_i}$.  If $N\leq d$, then the representation, $\{U_{g_i},\, g_i\in\mathbb{Z}_N\}$, in Eq.~\eqref{35} contains the regular representation. If $N>d$, then we need to tensor the representation of Eq.~\eqref{35} with itself a number of times, $r$, such that every inequivalent irrep appears at least once. In Appendix~\ref{append1}, we show that $r$ is finite and depends only on the representation, $U$, of $G$. In Appendix~\ref{append2} we show how to explicitly compute $r$ using some examples and here we will derive it for $\mathbb{Z}_N$.

Let $N>d$ and consider the representation $\{U_{g_i}^{\otimes r},\, g_i\in\mathbb{Z}_N\}$. Using Eq.~\eqref{35},  the latter is given by
\begin{equation}
U_{g_i}^{\otimes r}=\sum_{n_1\ldots n_r=0}^{d-1} \omega^{(n_1+\ldots+n_r)g_i}\ketbra{n_1\ldots n_r}
\label{36}
\end{equation}
for all $g_i\in\mathbb{Z}_N$, where $n_1,\ldots, n_r\in(0,\ldots, d-1)$ and $n_1+\ldots+n_r$ are added modulo $N$. Similarly to the $\mathbb{Z}_3$ example above we use the notation $\{\ket{\lambda,\beta}\}_{\beta=0}^{\alpha^{(\lambda)}}\equiv \{\ket{n_1,\ldots,n_r}\}_{(n_1+\ldots+n_r){\mathrm{mod}\,N}=\lambda}$, where
 $\alpha^{(\lambda)}$ denotes the number of states corresponding to the same $\lambda$. Using this notation, the operators $U_{g_i}^{\otimes r}$ may be re-written as
\begin{align}\nonumber
U_{g_i}^{\otimes r}&=\sum_{\lambda=0}^{r(d-1)}\sum_{\beta=1}^{\alpha^{(\lambda)}}\, \omega^{\lambda g_i}\ketbra{\lambda,\beta}\\
&=\bigoplus_{\lambda=0}^{r(d-1)} U_{g_i}^{(\lambda)}\otimes I_{\alpha^{(\lambda)}},\,\forall g_i\in\mathbb{Z}_N.
\label{37}
\end{align}
Since $\mathbb{Z}_N$ has $N$ inequivalent irreps,  it follows from Eq.~\eqref{37} that $\{U_{g_i}^{\otimes r},\,g_i\in\mathbb{Z}_N\}$ contains the regular representation whenever $r(d-1) \geq N-1$.  That is, the space $\cH_{\cR}\equiv\cH_2^{\otimes r'}$, with $r'=\log_2 N$, on which the regular representation acts is embedded in the higher dimensional space  $\cH_d^{\otimes r}$, with $r=\lceil \frac{N-1}{d-1}\rceil$. 
Notice that this corresponds to an exponential increase of required resources: $r=\lceil \frac{N-1}{d-1}\rceil$ qudits are required to ensure that the set $S^{(U)}_{(r,\ket{\psi})}$ containing $r'=\log_2 N$ states satisfies properties (1) and (2).  Despite this exponential overhead we find an efficient implementation for the group $\mathbb{Z}_N$ that scales only linearly with the number of group elements $N$. 

We now construct the set of token states for such a channel.  Notice that the states $\{\ket{\lambda,\beta}\}_{\beta=1}^{\alpha^{(\lambda)}}$, for a given $\lambda$, form an $\alpha^{(\lambda)}$-dimensional DFS.  However, one state from this subspace is sufficient to construct our token states since for all abelian groups $d_\lambda=\mathrm{dim}(\cM^{(\lambda)})=1$ in Eq.~\eqref{19}. Without loss of generality, we choose $\ket{\lambda,1}$ for each irrep $\lambda$.  In the computational basis the state $\ket{\lambda,1}$ corresponds to the $r$-qudit state where the first $\lambda$ qudits are in state $\ket{1}$, and the remaining $r-\lambda$ qudits are in the state $\ket{0}$. Hence,  the fiducial state, $\ket{\psi}\in\cH_d^{\otimes r}$, of Eq.~\eqref{19} is given by
\begin{equation}
\ket{\psi}=\sqrt{\frac{1}{N}}(\ket{0\ldots00}+\ket{0\ldots01}+\ket{0\ldots11}+\ldots+ \ket{1\ldots11}),
\label{38a}
\end{equation}
or in the $\{\ket{\lambda,\beta}\}$ basis
\begin{equation}
\ket{\psi}=\sqrt{\frac{1}{N}}\sum_{\lambda=0}^{N-1} \ket{\lambda,1}.
\label{38b}
\end{equation}
The set of token states, $S^{(U)}_{(r,\ket{\psi})}$ is therefore given by
\bea
\left\lbrace\ket{\psi(g_i)}=\sqrt{\frac{1}{N}}\sum_{\lambda=0}^{N-1}\omega^{\lambda\cdot g_i}\ket{\lambda,1},\, g_i\in\mathbb{Z}_N\right\rbrace,
\label{39}
\eea
where we have chosen the more compact form of Eq.~\eqref{38b} for the state $\ket{\psi}\in\cH_d^{\otimes r}$.
To communicate an arbitrary $m$-partite state, $\ket{\phi}\in\cH_d^{\otimes m}$, Alice prepares the state
\begin{equation}
\ket{\chi_{\phi}}=\frac{1}{N}\sum_{g_i\in\mathbb{Z}_N} \sum_{\lambda=0}^{N-1}\omega^{\lambda\cdot g_i}\ket{\lambda,1}\otimes U_{g_i}^{\otimes m}\ket{\phi}.
\label{40}
\end{equation}
Similarly to the previous example, it can be shown that $\ket{\chi_{\phi}}$ is invariant under $U_{g_i}^{\otimes(r+m)}$ for any $g_i\in\mathbb{Z}_N$. Thus, if Alice sends $r+m$ qudits, prepared in the state $\ket{\chi_\phi}$, through the channel, Bob will receive the $r+m$ qudits in the state $\ket{\chi_{\phi}}$ independently of the probability distribution $\{p_{g_i},\, g_i\in\mathbb{Z}_N\}$.

As before, the decoding of the quantum data is achieved by performing the measurement $\{A_i=\ketbra{\psi(g_i)},\, \mathrm{for}\; i\in (0,\ldots, N-1),\,A_\perp=I-\sum_iA_i\}$.  Conditioned on the outcome, $i$, of this measurement the correction $U_{g_i^{-1}}^{\otimes m}$ to the remaining qudits is applied \footnote{Like before, the outcome $A_\perp$ has zero probability of occurrence.}.

In the next section we explicitly calculate the number of elementary gates required to encode and decode quantum data using the protocol described above.

\section{\label{sec:4} Implementation of our protocol}

In this section we analyze the required resources for encoding and decoding quantum data transmitted through collective noise channels described by an arbitrary discrete group, $G$.  As mentioned before, whereas Alice and Bob can communicate using the measure and re-align protocol of~\cite{BRST09}, such a protocol is undesirbale since it requires Bob to learn the action of the channel. As the noise of the channel can be used to offer security~\cite{BRS04}, or to improve the performance of certain quantum information primitives~\cite{SR01,HOT06}, it is advantageous to utilize an error-avoiding protocol that reveals no information about the noise of the channel.

In the following we show how to implement the protocol of Sec.~\ref{sec:3}.  In particular, we discuss the required number of elementary gates of the encoding and decoding circuit, as well as the logical depth of the circuit. A direct comparison to previously introduced DFS schemes~\cite{BCH06} reveals that our method is more efficient, and also achieves the optimal transmission rate in the asymptotic limit.  We also provide an upper bound on the number of elementary gates needed to implement the measure and re-align protocol.
 
\subsection{Encoding circuit}
For ease of exposition we shall assume throughout that all the physical systems used in the protocol are qubits.  We determine the number of elementary gates for the case of qudits at the end of this section.

Recall that our protocol encodes quantum information contained in an $m$-qubit state, $\ket{\phi}\in\cH_2^{\otimes m}$, by preparing the state of Eq.~\eqref{20}, where $\ket{\psi}\in \cH_2^{\otimes r}$ for finite  $r$, is given by Eq.~\eqref{19}.  Notice that there are $r' = \log_2 |G|$ orthogonal token states, which are, however, encoded into $r \geq r'$ qubits to ensure the proper behavior under $\{U_{g_i}^{\otimes r},\,g_i\in G\}$. First we associate to each group element $g_i\in G$ a computational basis vector, 
\bea 
\ket{g_i}\equiv \ket{i_{r'},\ldots i_{1}}\in \cH_2^{\otimes r'}, 
\eea
where $i=\sum_{k=1}^{r'} 2^{k-1} i_k$. Then we define the unitary operation $T$ in such a way that 
\bea
T\ket{0}^{\otimes r-r'}\ket{g_i}=\ket{\psi(g_i)},
\label{operationT}
\eea
i.e.~a computational basis state $\ket{g_i}$ of $r'$ qubits, embedded into $\cH_2^{\otimes r}$, is transformed to a token state $\ket{\psi(g_i)}$ of $r$ qubits. We can now re-write Eq.~\eqref{20} as
\bea
\ket{\chi_\phi}= \big(T \otimes \one\big) \ket{0}^{\otimes r-r'}\frac{1}{\lvert G\rvert}\sum_{g_i\in G}\ket{g_i}\otimes U_{g_i}^{\otimes m}\ket{\phi}.
\label{43}
\eea
The encoding of quantum information takes place in two steps. One first prepares the $r'+m$ qubit state $\frac{1}{\lvert G\rvert}\sum_{g_i\in G}\ket{g_i}\otimes U_{g_i}^{\otimes m}\ket{\phi}$, followed by the $r$ qubit operation $T$. The second step can be implemented using at most ${\cal O}(2^r)$ elementary gates. The latter is an upper bound on the number of gates required to implement the measure and re-align strategy of~\cite{BRS04}.  In the following we concentrate on the first step, in particular the efficient implementation of the unitary operation, $W$, acting on $r'+m$ qubits, defined as
\bea
W\equiv\sum_{g_i\in G}\proj{g_i} \otimes U_{g_i}^{\otimes m}.
\label{44}
\eea

To present a circuit implementation of the gate in Eq.~\eqref{44} we define the unitary operators
\bea
\label{Wgm}
W_{g_i}&=&(\one-\proj{g_i})\otimes\one +\proj{g_i}\otimes U_{g_i},\\
W_{g_i}^m&=&(\one-\proj{g_i})\otimes\one +\proj{g_i}\otimes U_{g_i}^{\otimes m}.
\eea
The gate $W_{g_i}$ implements a unitary operation $U_{g_i}$ only if the control register is in state $\ket{g_i}$. In case $\{\ket{g_i}\}$ forms a complete orthonormal basis on $\cH_2^{\otimes r'}$ (i.e.~$\lvert G\rvert=2^{r'}$), we have
\begin{equation}
W\equiv\prod_{g_i\in G} W_{g_i}^m,
\end{equation}
and therefore
\begin{equation}
W\left(\ket{+}^{\otimes r'}\otimes\ket{\phi}\right)=\frac{1}{\lvert G\rvert}\sum_{g_i\in G} \ket{g_i}\otimes U_{g_i}^{\otimes m}\ket{\phi},
\label{45}
\end{equation}
where $\ket{+}^{\otimes r'}\equiv\frac{1}{\lvert G\rvert}\sum_{g_i\in G}\ket{g_i}$~\footnote{Note that if $\{\ket{g_i}\}$ does not form a complete orthonormal basis, that is $\lvert G\rvert<2^{r'}$, we have to apply $W$ to a state which is the superposition of $\lvert G\rvert$ computational basis states (which does not coincide with $\ket{+}^{r'}$). Such an input state can be easily generated in the following way. Let $\tilde r<r'$ be such that $2^{\tilde r-1}<|G|<2^{\tilde r}$. Then, $\ket{\Psi}=\otimes_{i=2}^{\tilde r} U_{1i}\ket{+}^{\otimes r'}$, where $U_{1i}=\proj{0}\otimes \one+\proj{1}\otimes U_i$. Choosing $U_i$, which acts on qubit $i$, either as $U_i=\one$, or $U_i\ket{+}=\ket{0}$ allows one to generate any desired superposition of the form $|0\rangle|+\ldots ++\rangle + |1\rangle|+0 \ldots 0+\rangle$  with $\tilde r$ gates. Note that not all terms in this sum have the same weight when we write the state in the computational basis. This can however easily be corrected by preparing the first qubit in a state $\cos\alpha |0\rangle + \sin\alpha|1\rangle$ rather than $|+\rangle$ and choosing $\alpha$ appropriately.}.
Notice that $W$ corresponds to the sequence of controlled-unitary gates, $W_{g_i}^m$, for all possible values of $g_i\in G$.

We will now outline a circuit implementing the gate $W$ of Eq.~\eqref{44}.  First, note that $\ket{g_i}$ is a binary representation of the value $i\leq 2^{r'}$ corresponding to the group element $g_i\in G$. The gate $W_{g_i}^m$, for some fixed $g_i\in G$, can be implemented by applying local unitaries $\sx^{(i_{r'})}\otimes\ldots\otimes\sx^{(i_{1})}$ to the first $r'$ qubits such that
$\sx^{(i_{r'})}\otimes\ldots\otimes\sx^{(i_{1})}\ket{i_{r'}\ldots i_{1}}=\ket{1}^{\otimes r'}$ and then applying the gate
\begin{equation}
V_{g_i}^m \equiv (\one-\proj{1}^{\otimes r'})\otimes \one+ \proj{1}^{\otimes r'}\otimes U_{g_i}^{\otimes m}.
\label{47}
\end{equation}
The latter can in turn be implemented by applying the gate
$V_{g_i}=(\one-\proj{1}^{\otimes r'})\otimes \one+ \proj{1}^{\otimes r'}\otimes U_{g_i}$
$m$ times, where the control qubits remain the same but the target qubit is always a new one. The implementation of the gates $V_{g_i}^m$, $W_{g_i}^m$, and $W$ is shown in  Figs.~(\ref{fig1},~\ref{fig2},~\ref{fig3}) respectively.

\subsection{Resources}

We will now count how many elementary gates are required in order to implement $V_{g_i}$. In \cite{BaBe95} it has been shown that a control gate of the form $\Lambda_{r'} (U)=(\one-\proj{1}^{\otimes r'})\otimes \one+ \proj{1}^{\otimes r'}\otimes U$, where $r'$ denotes the number of control qubits, can be implemented using $40(r'-2)+1$
elementary gates \footnote{The factor $4(r'-2)$ is the number of Toffoli gates required to implement a $r'$-control-X gate, and the factor 10 comes from applying these gates twice (before and after), and using 5 CNOTs per Toffoli gate. The $+1$ is the controlled-U operation (see Lemma 7.11 and Lemma 7.2 in \cite{BaBe95}}.  In order to apply a controlled-$U_{g_i}^{\otimes m}$ gate we simply apply the $m$ controlled-$U_{g_i}$ gates with different target qubits in between the two $r'$-controlled-$\sx$ gates (see Figs.~(\ref{fig1},~\ref{fig2})).

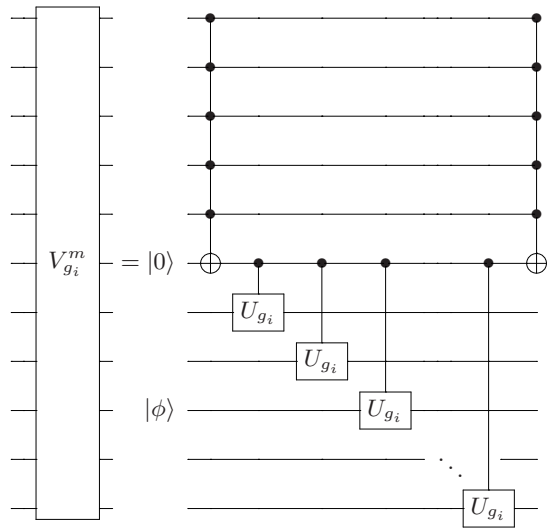
\begin{figure}
\[\Qcircuit @C=.5em @R=0em @!R {
&\qw&\multigate{10}{V_{g_i}^m}&\qw\\
\push{\rule{0em}{2em}}
&\qw&\ghost{V_{g_i}^m}&\qw\\
&\qw&\ghost{V_{g_i}^m}&\qw\\
&\qw&\ghost{V_{g_i}^m}&\qw\\
&\qw&\ghost{V_{g_i}^m}&\qw\\
&\qw&\ghost{V_{g_i}^m}&\qw\\
&\qw&\ghost{V_{g_i}^m}&\qw\\
&\qw&\ghost{V_{g_i}^m}&\qw\\
&\qw&\ghost{V_{g_i}^m}&\qw\\
&\qw&\ghost{V_{g_i}^m}&\qw\\
&\qw&\ghost{V_{g_i}^m}&\qw\\
}\hspace{10mm}
\Qcircuit @C=.5em @R=0em @!R {
&\ctrl{1}&\qw&\qw&\qw&\qw&\qw&\qw&\qw&\ctrl{1}\\
\push{\rule{0em}{2em}}
&\ctrl{1}   &\qw         &\qw&\qw&\qw&\qw&\qw&\qw&\ctrl{1}\\
&\ctrl{1}   &\qw         &\qw&\qw&\qw&\qw&\qw&\qw&\ctrl{1}\\
&\ctrl{1}   &\qw         &\qw&\qw&\qw&\qw&\qw&\qw&\ctrl{1}\\
&\ctrl{1}   &\qw         &\qw&\qw&\qw&\qw&\qw&\qw&\ctrl{1}\\
\lstick{=\ket{0}}&\targ      &\ctrl{1}    &\ctrl{2}&\ctrl{3}&\qw&\qw&\qw&\ctrl{5}&\targ\\
&\qw        &\gate{U_{g_i}} &\qw &\qw     &\qw&\qw&\qw&\qw&\qw\\
&\qw        &\qw          &\gate{U_{g_i}} &\qw&\qw&\qw&\qw&\qw&\qw\\
\lstick{\ket{\phi}}&\qw        &\qw          &\qw &\gate{U_{g_i}}&\qw&\qw&\qw&\qw&\qw\\
&\qw        &\qw          &\qw&\qw&\qw&\push{\rule{0.1em}{0em}}&\ddots&\push{\rule{1em}{0em}}&\qw\\
&\qw         &\qw        &\qw&\qw&\qw&\qw&\qw&\gate{U_{g_i}}&\qw\\
}
\]
\caption{The quantum circuit implementation of the encoding operation $V_{g_i}^m$.  We note that the circuit implementation of the  $r'$-controlled Toffoli gates in this circuit can be found in Lemma 7.2 of~\cite{BaBe95}.}
\label{fig1}
\end{figure}
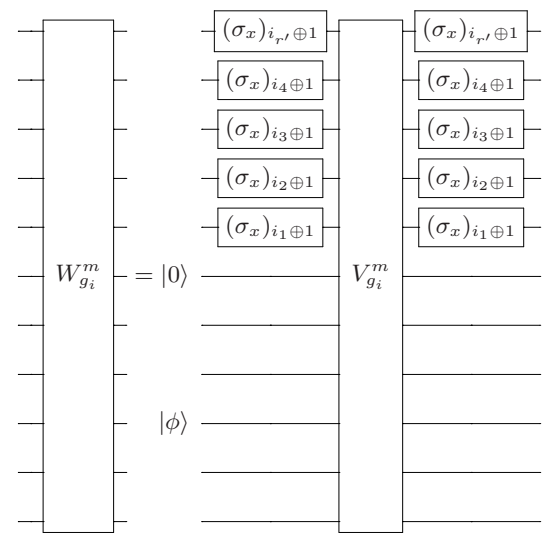
\begin{figure}
\[\Qcircuit @C=.5em @R=0em @!R {
&\qw&\multigate{10}{W_{g_i}^m}&\qw\\
\push{\rule{0em}{2em}}
&\qw&\ghost{W_{g_i}^m}&\qw\\
&\qw&\ghost{W_{g_i}^m}&\qw\\
&\qw&\ghost{W_{g_i}^m}&\qw\\
&\qw&\ghost{W_{g_i}^m}&\qw\\
&\qw&\ghost{W_{g_i}^m}&\qw\\
&\qw&\ghost{W_{g_i}^m}&\qw\\
&\qw&\ghost{W_{g_i}^m}&\qw\\
&\qw&\ghost{W_{g_i}^m}&\qw\\
&\qw&\ghost{W_{g_i}^m}&\qw\\
&\qw&\ghost{W_{g_i}^m}&\qw\\
}\hspace{10mm}
\Qcircuit @C=.5em @R=0em @!R {
&\gate{(\sigma_x)_{i_{r'}\oplus1}}&\multigate{10}{V_{g_i}^m}&\gate{(\sigma_x)_{i_{r'}\oplus1}}&\qw\\
\push{\rule{0em}{2em}}
&\gate{(\sigma_x)_{i_4\oplus1}}&\ghost{V_{g_i}^m}&\gate{(\sigma_x)_{i_4\oplus1}}&\qw\\
&\gate{(\sigma_x)_{i_3\oplus1}}&\ghost{V_{g_i}^m}&\gate{(\sigma_x)_{i_3\oplus1}}&\qw\\
&\gate{(\sigma_x)_{i_2\oplus1}}&\ghost{V_{g_i}^m}&\gate{(\sigma_x)_{i_2\oplus1}}&\qw\\
&\gate{(\sigma_x)_{i_{1}\oplus1}}&\ghost{V_{g_i}^m}&\gate{(\sigma_x)_{i_{1}\oplus1}}&\qw\\
\lstick{=\ket{0}}&\qw&\ghost{V_{g_i}^m}&\qw&\qw\\
&\qw&\ghost{V_{g_i}^m}&\qw&\qw\\
&\qw&\ghost{V_{g_i}^m}&\qw&\qw\\
\lstick{\ket{\phi}}&\qw&\ghost{V_{g_i}^m}&\qw&\qw\\
&\qw&\ghost{V_{g_i}^m}&\qw&\qw\\
&\qw&\ghost{V_{g_i}^m}&\qw&\qw
}
\]
\caption{The quantum circuit for $W_{g_i}^m$ for any state $\ket{g_i}=\ket{i_{r'}\ldots i_{1}}$, a binary representation of the group element $g_i\in G$. The gate $(\sigma_x)_{i_m\oplus1}$ flips the $m^{\mathrm{th}}$ qubit of the input state, if the $m^{\mathrm{th}}$ digit, $i_m$, in the binary representation of $g_i\in G$ is zero. After implementing the gate $V_{g_i}^m$ the bit string is restored to its original value.}
\label{fig2}
\end{figure}
Thus,
\bea
f(r')\equiv 40(r'-2)+m
\label{fr}
\eea
elementary gates are required to implement $V_{g_i}^m$. Therefore, the number of gates required to implement $W_{g_i}^m$ is $M \equiv 40(r'-2)+m+r'$ 
(another $r'$ operations for local basis change in the control register \footnote{Notice that there would actually be $2r'$ operations for the basis change, $r'$ before and $r'$ after the application of $V_{11 \ldots 1}^m$. However, the second set of $r'$ operations only returns the control register to its initial value, which is not necessary for a single gate $W_{g_j}^m$ and can in fact be combined with the first set of $r'$ operation of the subsequent gate $W_{g_l}^m$.}). Since a total of $|G|$ different gates $W_{g_i}^m$ need to be applied to implement $W$ (one for each group element $g_i\in G$ - see Eq.~\eqref{45} and Fig.~\ref{fig3}), we find that the total number of elementary gates required to implement $W$ is given by
\bea
\lvert G\rvert M=\lvert G\rvert(41r'-80+m),
\label{49}
\eea
where $|G|=2^{r'}$. That is, the resources required to encode the quantum data are linear in the number of qubits, $m$, to be transmitted and scale as $\lvert G\rvert \log(\lvert G\rvert)$.  After performing the number of gates in Eq.~\eqref{49} an additional ${\cal O}(2^r)$ gates are required to implement the unitary $T$ that maps the computational basis states $\{\ket{g_i}\}$ to the set of token states $\{\ket{\psi({g_i})}\}$~\footnote{This is an upper bound on the number of gates required to implement $T$.}.
\begin{figure}
\[
\Qcircuit @C=.5em @R=0em @!R {
&\gate{H}&\multigate{10}{W_{00\ldots00}^m}&\multigate{10}{W_{00\ldots01}^m}&\multigate{10}{W_{00\ldots10}^m}&\qw&\ldots\\
\push{\rule{0em}{2em}}
&\gate{H}&\ghost{W_{00\ldots00}^m}&\ghost{W_{00\ldots01}^m}&\ghost{W_{00\ldots10}^m}&\qw&\ldots\\
\lstick{\ket{00\ldots0}}&\gate{H}&\ghost{W_{00\ldots00}^m}&\ghost{W_{00\ldots10}^m}&\ghost{W_{00\ldots10}^m}&\qw&\ldots\\
&\gate{H}&\ghost{W_{00\ldots00}^m}&\ghost{W_{00\ldots01}^m}&\ghost{W_{00\ldots10}^m}&\qw&\ldots\\
&\gate{H}&\ghost{W_{00\ldots00}^m}&\ghost{W_{00\ldots01}^m}&\ghost{W_{00\ldots10}^m}&\qw&\ldots\\
&\qw&\ghost{W_{00\ldots00}^m}&\ghost{W_{00\ldots01}^m}&\ghost{W_{00\ldots10}^m}&\qw&\ldots\\
&\qw&\ghost{W_{00\ldots00}^m}&\ghost{W_{00\ldots01}^m}&\ghost{W_{00\ldots10}^m}&\qw&\ldots\\
&\qw&\ghost{W_{00\ldots00}^m}&\ghost{W_{00\ldots01}^m}&\ghost{W_{00\ldots10}^m}&\qw&\ldots\\
\lstick{\ket{\phi}}&\qw&\ghost{W_{00\ldots00}^m}&\ghost{W_{00\ldots01}^m}&\ghost{W_{00\ldots10}^m}&\qw&\ldots\\
&\qw&\ghost{W_{00\ldots00}^m}&\ghost{W_{00\ldots01}^m}&\ghost{W_{00\ldots10}^m}&\qw&\ldots\\
&\qw&\ghost{W_{00\ldots00}^m}&\ghost{W_{00\ldots01}^m}&\ghost{W_{00\ldots10}^m}&\qw&\ldots}\hspace{5mm}
\Qcircuit @C=.5em @R=0em @!R {
&\qw&\multigate{10}{W_{11\ldots1}^m}&\qw\\
\push{\rule{0em}{2em}}
&\qw&\ghost{W_{11\ldots1}^m}&\qw\\
&\qw&\ghost{W_{11\ldots1}^m}&\qw\\
&\qw&\ghost{W_{11\ldots1}^m}&\qw\\
&\qw&\ghost{W_{11\ldots1}^m}&\qw\\
&\qw&\ghost{W_{11\ldots1}^m}&\qw\\
&\qw&\ghost{W_{11\ldots1}^m}&\qw\\
&\qw&\ghost{W_{11\ldots1}^m}&\qw\\
&\qw&\ghost{W_{11\ldots1}^m}&\qw\\
&\qw&\ghost{W_{11\ldots1}^m}&\qw\\
&\qw&\ghost{W_{11\ldots1}^m}&\qw}
\]
\caption{The circuit implementation of the encoding circuit $W=\sum_{g_i\in G} W_{g_i}^m$, where $g_i\in G$ is written in binary notation.}
\label{fig3}
\end{figure}
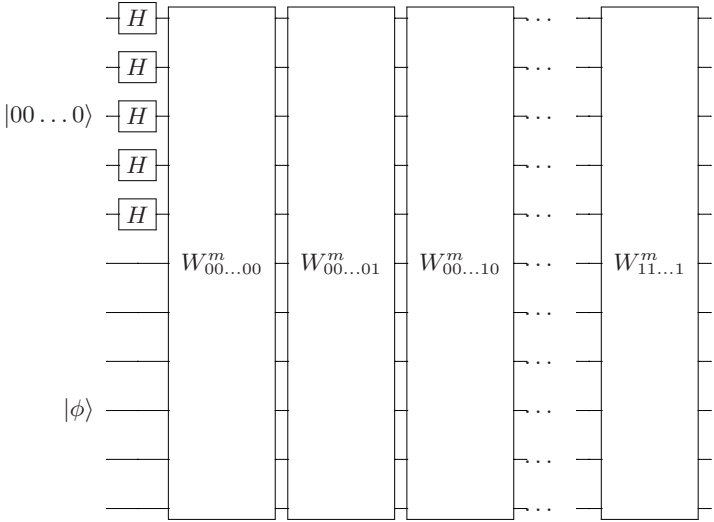

Finally, we note that if the dimension of the representation, $U$, of $G$ is $d$, i.e.~if the operators $\{U_{g_i},\, g_i\in G\}$ act on $d$-dimensional systems, then the number of elementary gates required to implement $W$ only increases by a factor which is independent of $m$ and $\lvert G\rvert$. In this case the unitary transformation, $T$, requires at most ${\cal O}(d^r)$ elementary gates in order to be implemented. In the following, we consider some special groups and show that the required resources can be significantly reduced.

\subsubsection{Abelian groups}
We now discuss a method to implement the gate $W$, given in Eq.~\eqref{44}, for the case of quantum channels whose collective noise is associated with a finite abelian group. Denoting by $g_1,\ldots, g_k$ the generators of the group, then for any $g\in G$ there exists a string, $(l_1(g),\ldots, l_k(g))$, with $l_j(g)\in N$, such that $g=g_1^{l_1(g)}\cdots g_k^{l_k(g)}$. Since we are dealing with finite groups we have that for any $j$ there exists a $L_j$ such that $l_j(g)\leq L_j$ for any $g\in G$. Writing, $\ket{g}=\ket{l_1(g)\ldots l_k(g)}$, where $l_j(g)$ is represented in binary notation, the gate $W$ in Eq.~\eqref{44} becomes
 \bea
W= \sum_{g\in G}\proj{g} \otimes U_{g}^{\otimes m}=\prod_{i=1}^k U_i,
 \label{50}
 \eea
with $U_i=\sum_{l_i=0}^{L_i}\proj{l_i} \otimes (U_{g_i}^{l_i})^{\otimes m}$. Note that each gate $U_i$ is acting on a $L_i$-dimensional control system (i.e.~$\log{L_i}$ control qubits) and $m$ target qubits (the first control system controls how often $g_1$ is applied, the second how often $g_2$ etc.). Since $U_i$ is acting on  $(\log L_i+m)$ qubits and requires at most $L_i$ control gates, the number of elementary gates required to implement $U_i$ is at most $L_i f(\log{L_i})=L_i[40(\log{L_i}-2)+m]$ (see Eq.~(\ref{fr})).  Thus, the total number of required elementary gates is $\sum_{i=1}^k L_i f(\log{L_i}) \leq k\,\mbox{max}_i\{L_i(f(\log{L_i})\}$, which might be substantially smaller than $|G|(41r'-80+m)$ gates required in the general case. Notice that the unitary basis change, $T$, could also be implemented more efficiently (i.e.~with less than ${\cal O}(2^r)$ gates) in certain cases.

\subsubsection{Cyclic groups}

Let us now consider the particular situation where the collective noise of the channel is associated with a general cyclic group, i.e.~a group with only one generating element $h\equiv g_1$. Then, the group elements are given by $h^j$ where $1 \leq j \leq L$ with $L=2^{r'}$. As before, we associate to each group element, $h^j$, the number $j$, which we write in binary notation as $j= j_{r'}\ldots j_{1}$, with $j=\sum_{i=1}^{r'} 2^{i-1} j_i$ and $U_{h^j}=U^j=\prod_{i=1}^{r'} U^{2^{i-1} j_i}$.

The last equation is the key to an efficient implementation of the operation $W$ (Eq. (\ref{44})). Rather than implementing a product of $2^{r'}$ controlled unitary operations $W_{g_i}^m$ (Eq. (\ref{Wgm})), it suffices to perform $r'$ controlled unitary operations that use the $i^{\rm th}$ qubit of the first register as the control, and perform the operation $(U^{2^i j_i})^{\otimes m}$ on the message qubits if the bit value is one. That is, the first control qubit controls whether $\one$ or $U^{\otimes m}$ is applied, the second whether $\one$ or $(U^2)^{\otimes m}$, the third whether $\one$ or $(U^4)^{\otimes m}$ etc. In total, this leads to the implementation of the operation $(U^j)^{\otimes m}$ if the control state is given by $\ket{j} = \ket{j_{r'}\ldots j_{2}j_1}$, corresponding exactly to the operation $W$. As each of these gates consists of $m$ two-qubit gates, we have a total of $m \log{L}=m r'$ gates. This leads to a significant reduction of the required resources for cyclic groups, i.e.~for $\mathbb{Z}_N$, with $N = 2^{r'} \in \mathbb{N}$, we only require $m \log N$ gates.

Notice that for cyclic groups the implementation of the unitary basis change to the token basis, i.e.~the unitary operation, $T$, in Eq.~(\ref{operationT}), can also be done much more efficiently than the upper bound of $\mathcal{O}(d^r)$ operations. We consider $d=2$, i.e.~qubits. First, we notice that for the group $\mathbb{Z}_N$, the required number of qubits to store the token states is given by $r =|G|-1=N-1$ (see Eq.~\eqref{37}), while only $r'= \log_2 N$ qubits are required to label the group elements. The implementation of $T$ then consists of a Schur transformation that maps the computational basis states to the Schur basis states (see Eq.~\eqref{38b}), followed by the Fourier transformation (see Eq.~\eqref{39}).  Notice that the order of the operations can be exchanged and, furthermore, the Fourier transformation just acts on standard basis states of $r' < r$ qubits. Both operations can be implemented efficiently; the Fourier transformation using ${\cal O}(r' \log r')$ gates, and the Schur transformation using ${\cal O}(r\mathrm{poly}(\log r))$ resources (following the results of \cite{BCH06}).

For the group $\mathbb{Z}_N$ the change from the computational basis to the $\ket{\lambda,1}$ basis  (see Eq. (\ref{38b})) can in fact be implemented using only $r+r'$ elementary gates as we now show. As mentioned above, since all irreps of $\mathbb{Z}_N$ are one-dimensional, we simply need to construct one state, $\ket{\lambda,\beta}$, for each $\lambda$ which is a computational basis state containing $\lambda$ ones.  In order to do so, we take $r'$ qubits (the first register) containing the computational basis states $\ket{j_{r'}, j_{r'-1},\ldots, j_{1}}$, and an additional $r=2^{r'}-1$ qubits (the second register) that we partition into $r'$ groups $A_{m}$.   Each group, $A_m$, in the second register corresponds to the $m^{th}$ qubit of the first register and contains $2^{m-1}$ qubits (corresponding to its value in binary representation).

To construct the required states we proceed in two steps.  Firstly, we perform $m$ CNOT operations with the $m^{\rm th}$ qubit in the first register as the control and the $2^{m-1}$ qubits in the group $A_m$ of the second register as targets. The $m$ CNOT operations cause all qubits within the group, $A_m$, in the second register to flip if the $m^{\rm th}$ qubit in the first register is in the state $\ket{j_m} = \ket{1}$, and does nothing to  the qubits in group $A_m$ if $\ket{j_m} = \ket{0}$. Secondly, we apply $r'$ CNOT operations with one of the qubits in $A_m$ of the second register as the control qubit, and the $m^{\rm th}$ qubit of the first register as the target. This ensures that the first register is in the state $\ket{0}^{\otimes r'}$, while the state of the second register contains a total number of ones corresponding to the value of the bit-string $j_{r'} j_{r'-1} \ldots j_{1}$.

For example, the elements of $\mathbb{Z}_N$ can be represented in binary notation using $r'=\log_2 N$ bits.  Without loss of generality we assume that $N$ is an exact power of two~\footnote{If $N$ is not an exact power of two then $r'=\lceil\log_2 N\rceil$.}.  Thus, the first register consists of the $r'$ qubit computational basis states of the form $\ket{j_{r'},j_{r'-1},\ldots,j_2,j_1}$, where $j_i\in (0,1)$ for $i\in (1,\ldots, r')$. The second register consists of $r=2^{r'}-1=N-1$ qubits, initially in the state $\ket{0}$, so that  the initial state of both registers is given by
\begin{equation}
\left(\ket{j_{r'}j_{r'-1}\ldots j_2j_1}\right)\otimes \left(\ket{0}^{\otimes\frac{N}{2}}\ket{0}^{\otimes\frac{N}{4}}\ldots\ket{0}^{\otimes 2}\ket{0}\right),
\label{register1}
\end{equation}
where we have partitioned the $r$ qubits in the second register into $r'$ groups, $A_m$, each containing $2^{m-1}$ qubits. After applying the first $m$ CNOT gates, with the second register as target,  the state of the two registers is
\begin{equation}
\left(\ket{j_{r'}j_{r'-1}\ldots j_2j_1}\right)
\otimes \left(\ket{j_{r'}}^{\otimes\frac{N}{2}}\ket{j_{r'-1}}^{\otimes\frac{N}{4}}\ldots\ket{j_2}^{\otimes 2}\ket{j_1}\right),
\label{register2}
\end{equation}
and after an additional $r'$ CNOT gates, with the first register as target, the final state of the two registers is
\begin{equation}
\left(\ket{0}^{\otimes r'}\right)\otimes\left(\ket{j_{r'}}^{\otimes\frac{N}{2}}\ket{j_{r'-1}}^{\otimes\frac{N}{4}}\ldots\ket{j_2}^{\otimes 2}\ket{j_1}\right).
\label{register3}
\end{equation}

From our discussion above, it follows that the total number of CNOT operations required to implement the basis change is given by $r+r'$. In addition, the Fourier transformation needs to be applied before this basis change, which involves ${\cal O}(r' \log r')$ gates. Thus, the overhead for implementing the operation $T$ (Eq.~(\ref{operationT})) for the case of finite cyclic groups is $r+r'+{\cal O}(r' \log r')=N-1+\log_2 N+{\cal O}(\log_2 N\log(\log_2 N)) = {\cal O}(N)$, i.e.~only {\em linear} with the number of group elements $N$, despite the exponential increase of $r$ compared to $r'$ required for the token states corresponding to the group $\mathbb{Z}_N$.  Together with the efficient implementation of the operation $W$ discussed above we find that for cyclic groups the encoding requires ${\cal O}(m \log N,N)$ elementary gates. 

\subsection{Logical depth}

It should be noted that the logical depth of our protocol is independent of the number of transmitted qubits $m$. This is in contrast to the DFS-based communication scheme put forward in~\cite{BCH06}.  It is easy to see that the logical depth of the circuit to implement $W$ is given by $\lvert G\rvert(41r'-80+1)$, since all the control gates occurring in $V_{g_i}^m$, acting on $m$ qubits originally prepared in $\ket{\phi}$, can be implemented in parallel. Also, the unitary basis change, $T$ in Eq.~(\ref{operationT}), only requires at most ${\cal O}(2^r)$ gates, where $r$ depends only on the representation, $U$, of $G$, leading to a logical depth that is {\em independent} of $m$. The same is true for the more efficient implementations for abelian and cyclic groups discussed above. This allows for a very efficient implementation of our communication scheme.

\subsection{Decoding}

In order to decode the information Bob simply measures in the basis $\{\ket{\psi(g_i)}\}$ and applies, depending on the outcome, one of the operations $U_{g_i^{-1}}^{\otimes m}$ in order to retrieve the state $\ket{\phi}\in\cH_2^{\otimes m}$. In practice, this can be done by implementing the inverse of the unitary operation $T$, appearing in Eq.~\eqref{operationT}, that maps the product basis to the token state basis, i.e.~$T^\dagger$, followed by $r'$ single qubit measurements in the computational basis. Notice that $T^\dagger$ can also be implemented with at most ${\cal O}(2^r)$ basic gates, independent of $m$, or more efficiently for certain groups as shown above. The final correction operation, $U_{g_i^{-1}}^{\otimes m}$, is comprised of single-qubit operations that can be performed independently and in parallel. This makes possible a multi-receiver scenario as described in Sec.~\ref{sec:3}. Notice however, that in general the auxiliary systems need to be transmitted to a single party who also performs the measurement, and then communicates the classical measurement outcome to the different receivers. 

\subsection{Asymptotic transmission rate}

We now compare the asymptotic rate of transmission of quantum information of our protocol to that of a DFS code.  First, let us compute the rate of transmission for the latter.  As not all $\cN^{(\lambda)}$ can be simultaneously utilized,  the maximum number of logical qubits, $m$, that can be transmitted using a number of physical qubits, $n$, is $m=\log_2(\mathrm{max}_\lambda\,\mathrm{dim}(\cN^{(\lambda)}))$.  Hence, the rate of transmission for a DFS code is given by
\begin{equation}
R_{DFS}=\frac{\log_2(\mathrm{max}_\lambda\mathrm{dim}(\cN^{(\lambda)})}{n},
\label{51}
\end{equation}
and it is known that $\lim_{n\rightarrow\infty}R_{DFS}\rightarrow 1-\frac{O(\log n)}{n}$~\cite{KBLW01}.

We now calculate the asymptotic rate of transmission, $R$, using the protocol described in Sec.~\ref{sec:3}.  As for any finite group $G$, $m$ logical qudits can be perfectly transmitted (i.e.~with unit fidelity of transmission) using $r+m$ physical qudits, the rate of transmission is given by
\begin{equation}
R=\frac{m}{r+m}.
\label{52}
\end{equation}
However, as $G$ is a finite group the number of qudits, $r$, required to construct our token states is finite and depends only on the representation, $U$, of $G$ (see Appendix~\ref{append1}).  Hence, in the limit $m\rightarrow\infty$, Eq.~\eqref{52} tends to unity and our protocol achieves the optimal transmission rate.  

\section{Summary and conclusions}\label{Conclusion}

In conclusion, we have introduced a new protocol for transmitting quantum information through channels with collective noise.  We have shown how to transmit $m$ logical qudits using $m+r$ physical qudits at a rate that is optimal in the asymptotic limit. The protocol makes use of ideas both from DFS and error correction. On the one hand, a specific state of system plus ancilla qubits is used that lies in a DFS of the joint system. On the other hand, the ancilla qubits are measured to determine the required correction operation similar to error correction. However, in our protocol no information about the channel and hence the actual error is revealed.

In the case of channels associated with a finite group $G$, the $m$ logical qudits can be transmitted with perfect fidelity, and can be efficiently encoded and decoded. We find that the number of elementary gates required for the encoding and decoding circuit scales as $\mathcal{O}(m,|G|\log|G|, d^r)$, where $r$ is an integer that depends solely on the channel in question, and local measurements.  For the case of finite cyclic groups, $\mathbb{Z}_N$, we discover that the encoding and decoding operations can be efficiently implemented with at most $m\log N+ \mathcal{O}(N)$ operations, where $N$ is the order of the cyclic group.  Moreover, the logical depth of the encoding and decoding circuit for finite groups is independent of the number of logical qudits, $m$.  As the required number of elementary gates scales only linearly in the number of logical qudits, our protocol is more efficient than the best currently known DFS protocols~\cite{BCH06,LNPST11}. Based on our findings, a practical implementation of our protocol seems feasible.

Whereas the implementation of our protocol for finite abelian groups is very efficient, it is not obvious if an efficient implementation of our protocol is feasible for the case of non-abelian groups.  This is due to the fact that, in general, a $\mathcal{O}(d^r)$ overhead is required to implement the unitary operator in Eq.~\eqref{operationT}, which performs a basis change from the computational basis to the token-state basis. Even though we have explicitly shown that for any finite group it is always possible to find such a token state basis, the required overhead depends on the group in question. For a group with $\lvert G\rvert$ elements $r'=\log_2\lvert G\rvert$ qubits are required to label the elements, however $r \geq r'$ qubits are needed to construct a token basis with desired properties. While for the Pauli group we find that $r=r'$, in the case of finite cyclic groups, we discover that $r= (\lvert G\rvert -1)/(d-1)$, i.e.~an exponential overhead. However, despite this exponential increase for cyclic groups, we have shown that the unitary operator in Eq.~\eqref{operationT} can be efficiently implemented using $\mathcal{O}(r)$ operations, i.e.~with an overhead that scales only linear in the number of group elements. Whether such an exponential overhead of $r$ also occurs for other groups, and whether this can also be compensated by a more efficient implementation of the operation $T$, is presently unknown.  The implementation of our protocol to the case of collective noise channels associated with continuous groups remains an interesting open problem. 

\section*{Acknowledgements}
We would like to thank Giulio Chiribella for his helpful comments and for pointing out an alternative proof of Lemma~\ref{lem:1}.  Michael Skotiniotis would like to thank the Institute for Theoretical Physics at the University of Innsbruck for their gracious hospitality while this work was being conducted.  The research was funded by the Austrian Science Fund (FWF): Y535-N16, SFB F40-FoQus, P20748-N16, P24273-N16 and the European Union (NAMEQUAM), NSERC Canada, EU-Canada exchange, and USARO.

\appendix

\section{\label{appendsu2}Representation theory of $\mathrm{SU}(2)$}

In this appendix we show how DFS arise in the presence of the most general type of collective noise that is associated with the group $\mathrm{SU}(2)$ on $N$, two-dimensional quantum systems~\footnote{In particular the most general collective noise is given by $\{p_g, \,g\in\mathrm{SU}(2)\}$ with $0\leq p_g<1$ satisfying $\int\, p_g\mathrm{d}g=1$.}.  
Since any $U\in \mathrm{SU}(2)$ can be written as $e^{-i\frac{\theta}{2} \vec{n}\cdot\vec{\sigma}}$, where $ \vec{n}$ denotes a three-dimensional vector and $\vec{\sigma}$ denotes the Pauli vector, $U^{\otimes N}=e^{-i\theta \vec{n}\cdot\vec{J}}$, with $\vec{J}=\frac{1}{2} \sum_i \vec{\sigma}_i$ denoting the total angular momentum operator.  Hence, $U^{\otimes N}$ commutes with $J^2=\vec{J}\cdot \vec{J}$ for any $U\in \mathrm{SU}(2)$. Thus, any $U^{\otimes N}$ is block-diagonal in the eigenbasis of $J^2$.  Denote by $\{\ket{J,M,\beta}\equiv\ket{J,M}\otimes\ket{\beta}\}$ an orthonormal basis for the $2^N$-dimensional Hilbert space, $\cH_2^{\otimes N}$, where $\{\ket{J,M}\}_{M=-J}^J$ is the joint eigenbasis of $J^2$ and the $z$-component of the total angular momentum operator, $J_z$, and  $\beta\in(1,\ldots,\alpha^{(J)})$ is a degeneracy (multiplicity) index, with $\alpha^{(J)}$ the number of orthonormal states of total angular momentum $J$ and $J_z=M$. In this basis  $U^{\otimes N}$ may be conveniently written in block diagonal form as $U^{\otimes N}=\oplus_{J}\,U^{(J)}\otimes I_{\alpha^{(J)}}$, where $J$ is the total angular momentum, $U^{(J)}$ are the irreps of $\mathrm{SU}(2)$, and $I_{\alpha^{(J)}}$ denotes the $\alpha^{(J)}$-dimensional identity operator.

Consequently, we can decompose the total Hilbert space, $\cH_2^{\otimes N}$, into orthogonal subspaces, ${\cal H}^{(J)}=\mbox{span}\, \{\ket{J,M,\beta}\}$, where $-J\leq M\leq J$ and $1\leq\beta\leq\alpha^{(J)}$, as ${\cal H}=\oplus_J  H^{(J)}=\oplus_J\cM^{(J)}\otimes\cN^{(J)}$. Here, $\cM^{(J)}$ is the space upon which the irrep $U^{(J)}$ of $\mathrm{SU}(2)$ acts, and $\cN^{(J)}$ is the multiplicity space upon which the trivial representation, $I_{\alpha^{(J)}}$, of $\mathrm{SU}(2)$ acts.  Moreover, as the dimension of the irrep $U^{(J)}$ coincides with $\mathrm{dim}(\cM^{(J)})$ given by $2J+1$, all irreps, $U^{(J)}$, are inequivalent.  It follows that the only irrep of trivial dimension is $U^{(J=0)}$ which, using the rules for addition of angular momenta, occurs in the decomposition of $U^{\otimes N}$ only if $N$ is even. Hence, for $N$ even, the sector $\cH^{(J=0)}$ is a decoherence-free subspace. For  $J>0$, the irreps, $U^{(J)}$ are of non-trivial dimension, and the sectors $\cH^{(J)}$ are no longer decoherence-free. For $J>0$ a decoherence-free subsystem exists if $\alpha^{(J)}>1$.

The smallest number of spin-1/2 systems that admits a non-trivial noiseless encoding for a general $U\in\mathrm{SU}(2)$ occurs for the case of three qubits.  The representation, $U^{\left(\frac{1}{2}\right)}_{(\theta,\phi,\psi)}$, where $(\theta,\phi,\psi)$ are the Euler angles, acting on a two-dimensional Hilbert space is given by
\begin{equation}
U^{\left(\frac{1}{2}\right)}_{(\theta,\phi,\psi)}=\left(\begin{array}{cc}
e^{-i\frac{1}{2}(\theta+\psi)}\cos(\phi/2) & -e^{-i\frac{1}{2}(\theta-\psi)}\sin(\phi/2)\\
e^{i\frac{1}{2}(\theta-\psi)}\sin(\phi/2) & e^{i\frac{1}{2}(\theta+\psi)}\cos(\phi/2)\end{array}\right).
\label{8}
\end{equation}
As mentioned above, the representation $U^{\left(\frac{1}{2}\right)\otimes 3}_{(\theta,\phi,\psi)}$ can be decomposed into orthogonal sectors, labeled by the total angular momentum $J$, as $U^{\left(\frac{1}{2}\right)\otimes 3}_{(\theta,\phi,\psi)}=U^{\left(\frac{3}{2}\right)}_{(\theta,\phi,\psi)}\bigoplus U^{\left(\frac{1}{2}\right)}_{(\theta,\phi,\psi)}\otimes I_2$, where $I_2$ is the two-dimensional identity operator and the representation $U^{\left(\frac{3}{2}\right)}_{(\theta,\phi,\psi)}$ is given by
\begin{equation}
\bra{\frac{3}{2},m'}U^{\left(\frac{3}{2}\right)}_{(\theta,\phi,\psi)}\ket{\frac{3}{2},m}=e^{-i m'\theta}\,d^{\left(\frac{3}{2}\right)}_{m',m}(\phi)\,e^{-i m\psi}.
\label{9}
\end{equation}
The matrix $d^{\left(\frac{3}{2}\right)}(\phi)$ in Eq.~\eqref{9} is the Wigner small-$d$ matrix given by
\begin{align}\nonumber
d^{\left(\frac{3}{2}\right)}_{3/2,3/2}(\phi)&=\frac{1+\cos\phi}{2}\cos\frac{\phi}{2} \\ \nonumber
d^{\left(\frac{3}{2}\right)}_{3/2,1/2}(\phi)&=-\sqrt{3}\frac{1+\cos\phi}{2}\sin\frac{\phi}{2}\\ \nonumber
d^{\left(\frac{3}{2}\right)}_{3/2,-1/2}(\phi)&=\sqrt{3}\frac{1-\cos\phi}{2}\cos\frac{\phi}{2}\\ \nonumber
d^{\left(\frac{3}{2}\right)}_{3/2,-3/2}(\phi)&=-\frac{1-\cos\phi}{2}\sin\frac{\phi}{2}\\ \nonumber
d^{\left(\frac{3}{2}\right)}_{1/2,1/2}(\phi)&=\frac{3\cos\phi-1}{2}\cos\frac{\phi}{2}\\
d^{\left(\frac{3}{2}\right)}_{1/2,-1/2}(\phi)&=-\frac{3\cos\phi+1}{2}\sin\frac{\phi}{2},
\label{10}
\end{align}
where the matrix elements, $d^{\left(\frac{3}{2}\right)}_{m',m}(\phi)$,  satisfy  the relation $d^{\left(\frac{3}{2}\right)}_{m,m'}(\phi)=(-1)^{m-m'}d^{\left(\frac{3}{2}\right)}_{m',m}(\phi)=d^{\left(\frac{3}{2}\right)}_{-m',m}(\phi)$.  Consequently, the total Hilbert space, $\cH_2^{\otimes 3}$, decomposes into orthogonal sectors, labelled by the total angular momentum quantum number $J$, with orthonormal basis vectors
\begin{eqnarray}\nonumber
\ket{J=\frac{3}{2},M=\frac{3}{2}}&=&\ket{000}\\ \nonumber
\ket{J=\frac{3}{2},M=\frac{1}{2}}&=&\frac{1}{\sqrt{3}}\left(\ket{001}+\ket{010}+\ket{100}\right)\\ \nonumber
\ket{J=\frac{3}{2},M=-\frac{1}{2}}&=&\frac{1}{\sqrt{3}}\left(\ket{110}+\ket{101}+\ket{011}\right)\\ \nonumber
\ket{J=\frac{3}{2},M=-\frac{3}{2}}&=&\ket{111}\\ \nonumber
\ket{J=\frac{1}{2},M=\frac{1}{2},\beta=0}&=&\frac{1}{\sqrt{2}}\left(\ket{100}-\ket{010}\right)\\ \nonumber
\ket{J=\frac{1}{2},M=-\frac{1}{2},\beta=0}&=&\frac{1}{\sqrt{2}}\left(\ket{011}-\ket{101}\right)\\ \nonumber
\ket{J=\frac{1}{2},M=\frac{1}{2},\beta=1}&=&\sqrt{\frac{2}{3}}\ket{001}-\frac{\ket{010}+\ket{100}}{\sqrt{6}}\\ \nonumber
\ket{J=\frac{1}{2},M=-\frac{1}{2},\beta=1}&=&\sqrt{\frac{2}{3}}\ket{110}-\frac{\ket{101}+\ket{011}}{\sqrt{6}},\\
\label{11}
\end{eqnarray}
where the degeneracy index, $\beta$, keeps track of whether $J=1/2$ arose due to the coupling of the first two qubits in a spin-1 or  spin-0 state of total angular momentum. Thus, the sector $\cH^{\left(J=\frac{1}{2}\right)}$ contains a two-dimensional DFS. Defining the logical basis
$
\ket{0_L}\equiv c_1\ket{J=\frac{1}{2},M=\frac{1}{2},\beta=0}+c_2\ket{J=\frac{1}{2},M=\frac{-1}{2},\beta=0},\\
\ket{1_L}\equiv d_1\ket{J=\frac{1}{2},M=\frac{1}{2},\beta=1}+d_2\ket{J=\frac{1}{2},M=\frac{-1}{2},\beta=1},
$
where $|c_1|^2+|c_2|^2=1$ and $|d_1|^2+|d_2|^2=1$, one can transmit one logical qubit noiselessly through the channel.

\section{\label{append1} Proof of Theorem~\ref{thm:1}}

In this appendix we provide a detailed proof of Theorem~\ref{thm:1} in Sec.~\ref{sec:3} regarding the construction of a set of token states, $S^{(U)}_{(r,\ket{\psi})}$. First, we show that for a representation, $U$, of a group, $G$, that is isomorphic to $G$, there exists an integer, $r$, such that $U^{\otimes r}$ contains the regular representation, $\cR$, of $G$ as a sub-representation. Next we show that there exists an $r$ and a state, $\ket{\psi}$, such that the set of states $S^{(U)}_{(r,\ket{\psi})}$ satisfies conditions (1) and (2) in Sec.~\ref{sec:3} if and only if $U^{\otimes r}$ contains the regular representation, $\cR$, of $G$ as a sub-representation.  Finally, we demonstrate that the state $\ket{\psi}\in\cH_d^{\otimes r}$ in the definition of $S^{(U)}_{(r,\ket{\psi})}$ can be chosen to be of the form given by Eq.~\eqref{19}.

\begin{lemma}
Let $U$ be a representation of $G$ that is isomorphic to $G$.  Then there exists a finite integer $r$, such that $U^{\otimes r}$ contains the regular representation as a sub-representation.
\label{lem:1}
\end{lemma}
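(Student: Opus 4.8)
The plan is to argue entirely at the level of characters, counting for each irreducible representation $U^{(\lambda)}$ its multiplicity in $U^{\otimes r}$ and showing that all of these multiplicities can be driven above $d_\lambda$ at once. By Eq.~\eqref{17} the regular representation decomposes as $\cR=\bigoplus_\lambda d_\lambda U^{(\lambda)}$, so the assertion ``$U^{\otimes r}$ contains $\cR$ as a sub-representation'' is equivalent to the statement that every irrep $U^{(\lambda)}$ appears in $U^{\otimes r}$ with multiplicity $a_\lambda^{(r)}\ge d_\lambda$. Writing $\chi(g)=\mathrm{tr}(U_g)$ for the character of $U$, using $\mathrm{tr}\big(U_g^{\otimes r}\big)=\chi(g)^r$, and applying the orthogonality relation (the full-group form of Eq.~\eqref{13}), this multiplicity is
\[
a_\lambda^{(r)}=\frac{1}{\lvert G\rvert}\sum_{g\in G}\chi(g)^r\,\overline{\chi^{(\lambda)}(g)} .
\]
Since $G$ has only finitely many irreps (one per conjugacy class), it will suffice to find a single $r$ making $a_\lambda^{(r)}\ge d_\lambda$ for all $\lambda$ simultaneously.

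Next I would isolate the identity term from the rest of the sum. The $g=e$ contribution is $d^r d_\lambda/\lvert G\rvert$, with $d=\chi(e)=\dim U$, and this is the term I expect to dominate and to force the multiplicity to grow. For the remaining terms I would use $\lvert\overline{\chi^{(\lambda)}(g)}\rvert\le\chi^{(\lambda)}(e)=d_\lambda$ and set $\delta\equiv\max_{g\ne e}\lvert\chi(g)\rvert$, which gives
\[
a_\lambda^{(r)}\ge d_\lambda\left(\frac{d^r}{\lvert G\rvert}-\delta^r\right).
\]
Provided $\delta<d$, the quantity $(d/\delta)^r/\lvert G\rvert$ diverges, so there is an $r_0$ — depending only on $d$, $\delta$ and $\lvert G\rvert$, and hence \emph{independent of $\lambda$} — such that $d^r/\lvert G\rvert-\delta^r\ge 1$ for all $r\ge r_0$. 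Then $a_\lambda^{(r)}\ge d_\lambda$ holds for every $\lambda$ at once, which is exactly the containment of $\cR$. Because $r_0$ is determined solely by the representation $U$ of $G$, this simultaneously yields the finiteness of $r$ and its independence from $m$ that are needed in the main text.

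The only step that genuinely requires work — and the one I expect to be the main obstacle — is the strict inequality $\delta<d$. Faithfulness of $U$ (the hypothesis that $U$ is isomorphic to $G$) gives $U_g\ne\one$ for every $g\ne e$, and since $U_g$ is unitary one has $\lvert\chi(g)\rvert=d$ precisely when all eigenvalues of $U_g$ coincide, i.e.~when $U_g=e^{i\theta}\one$ is a pure phase. Hence I must rule out that any $g\ne e$ acts as a global phase: the elements $g$ with $U_g$ scalar form a central subgroup on which $U^{\otimes r}$ acts through the single character $g\mapsto e^{ir\theta}$ for every $r$, so if this subgroup were nontrivial $U^{\otimes r}$ could never realise all characters of $G$ and could never contain $\cR$. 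This degenerate situation is exactly that of a collective noise acting only by a global phase, which decoheres nothing and carries no information-theoretic content; excluding it secures $\delta<d$ and closes the estimate above. I would also mention an alternative route, closer in spirit to the Burnside--Brauer theorem: using that the faithful character $\chi$ takes only finitely many values one shows that each irrep occurs in \emph{some} tensor power $U^{\otimes k}$, after which the multiplicities are boosted into a common power, subject to the same non-scalar requirement.
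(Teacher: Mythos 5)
Your proof is correct where it applies, but it takes a genuinely different route from the paper's. The paper's proof of Lemma~\ref{lem:1} (Appendix~\ref{append1}) is qualitative: it invokes the Burnside-type statement (Theorem~\ref{burn}) that some power $U^{\otimes n}$ contains every irrep at least once, writes $U^{\otimes n}=\Gamma\oplus V$ with $\Gamma=\bigoplus_\lambda U^{(\lambda)}$, and then bootstraps --- inside $\Gamma^{\otimes m}$ it isolates sub-blocks such as $U^{(\lambda)\otimes m}$, which again contain $\Gamma$, so the multiplicity of every irrep can be pushed above $d_\lambda$ and one takes $r=\max_\lambda r_\lambda$. You instead work directly with the multiplicity formula $a^{(r)}_\lambda=\frac{1}{\lvert G\rvert}\sum_{g}\chi(g)^r\chi^{(\lambda)}(g)^{*}$ (the same formula the paper only uses computationally, in Appendix~\ref{append2}, to determine $r$ in examples), split off the identity term, and show it dominates whenever $\delta\equiv\max_{g\neq e}\lvert\chi(g)\rvert<\chi(e)=d$. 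This buys an explicit, $\lambda$-independent threshold $r_0$ determined by $d$, $\delta$ and $\lvert G\rvert$, rather than a bare existence statement.

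More importantly, your route exposes a hypothesis that the lemma as stated silently needs. If some $g\neq e$ is represented by a scalar $e^{i\theta}\one$ --- which is perfectly compatible with faithfulness --- then $\delta=d$ and the lemma is in fact \emph{false}: take $G=\mathbb{Z}_2$ with $U_e=\one$ and $U_g=-\one$ on $\cH_2$; this is faithful, yet $U_g^{\otimes r}=(-1)^r\one$, so $U^{\otimes r}$ contains only the trivial irrep for even $r$ and only the sign irrep for odd $r$, and no tensor power contains $\cR$. Your observation that the scalar elements form a central subgroup on which every $U^{\otimes r}$ acts through a single character is exactly the right way to see that $\delta<d$ is necessary as well as sufficient. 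The paper's proof inherits this gap from Theorem~\ref{burn}: the correct Burnside--Brauer statement guarantees only that every irrep occurs in \emph{some} power $U^{\otimes n}$ (with $n$ depending on the irrep, e.g.~among $U^{\otimes 0},\ldots,U^{\otimes(k-1)}$ if $\chi$ takes $k$ values), not that a single power catches them all. For the channels actually treated in the paper (Pauli, $\mathbb{Z}_N$, $S_3$) no nonidentity element acts as a scalar, so nothing breaks in practice, but your version of the lemma --- with the non-scalarity condition made explicit --- is the one that is actually true.
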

To prove Lemma~\ref{lem:1} we will make use of the following theorem, whose proof can be found in~\cite{Burnside}.  An alternative proof, as well as several bounds on the integer, $r$, can also be found in~\cite{C06}~\footnote{We became aware of the fact that Lemma~\ref{lem:1} had already been proved by~\cite{C06} after this work was completed.}.
\begin{theorem}
Let $U$ be a representation of $G$ that is isomorphic.  Then, there exists an integer $n$, such that $U^{\otimes n}$ contains every irrep of $G$ at least once.
\label{burn}
\end{theorem}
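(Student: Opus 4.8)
The plan is to translate the statement into a positivity condition on characters and then exploit the faithfulness of $U$ through a Vandermonde argument, which is the classical route due to Burnside~\cite{Burnside,C06}. Writing $\chi$ for the compound character of $U$ and $\chi^{(\lambda)}$ for the irreducible characters, the multiplicity of the irrep $U^{(\lambda)}$ in $U^{\otimes n}$ is
\begin{equation}
m_\lambda(n)=\frac{1}{\lvert G\rvert}\sum_{g\in G}\chi(g)^{n}\,\chi^{(\lambda)*}(g),
\label{eq:mult}
\end{equation}
so it suffices to produce a power in which $m_\lambda(n)>0$ for every $\lambda$. First I would note that $\chi$ takes only finitely many distinct values $\alpha_1,\ldots,\alpha_k$ on $G$, one of which is $\alpha_1=\chi(e)=\dim U\equiv d$.

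The core step is a proof by contradiction. Fixing an irrep $\lambda$, I would assume it is absent from $U^{\otimes n}$ for all $n\in\{0,\ldots,k-1\}$, i.e.~$m_\lambda(n)=0$ there. Collecting group elements by their character value through $G_i=\{g\in G:\chi(g)=\alpha_i\}$ and writing $S_i=\sum_{g\in G_i}\chi^{(\lambda)*}(g)$, these $k$ conditions become
\begin{equation}
\sum_{i=1}^{k}\alpha_i^{\,n}\,S_i=0,\qquad n=0,1,\ldots,k-1,
\label{eq:vander}
\end{equation}
a homogeneous system whose coefficient matrix $(\alpha_i^{\,n})$ is Vandermonde in the distinct nodes $\alpha_i$; its determinant is nonzero, so $S_i=0$ for all $i$. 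To derive a contradiction I would invoke faithfulness: since each $U_g$ has finite order, $\chi(g)$ is a sum of $d$ roots of unity, so $\lvert\chi(g)\rvert\le d$, and the real value $\chi(g)=d$ forces $U_g=\one$, which, as $U$ is isomorphic to $G$, holds only for $g=e$. Hence the class $G_{i_0}$ with $\alpha_{i_0}=d$ equals $\{e\}$, giving $S_{i_0}=\chi^{(\lambda)*}(e)=\dim U^{(\lambda)}>0$, a contradiction. Thus every irrep sits inside at least one of $U^{\otimes 0},\ldots,U^{\otimes(k-1)}$.

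The step I expect to be the main obstacle is consolidating this into a single exponent $n$, as literally stated. Here I would return to Eq.~\eqref{eq:mult} and isolate the $g=e$ term, $m_\lambda(n)=\dim U^{(\lambda)}\,d^{\,n}/\lvert G\rvert+R_\lambda(n)$ with $\lvert R_\lambda(n)\rvert\le (\lvert G\rvert-1)\,(d')^{\,n}\max_g\lvert\chi^{(\lambda)}(g)\rvert/\lvert G\rvert$, where $d'=\max_{g\neq e}\lvert\chi(g)\rvert$. Whenever no nontrivial element acts as a scalar we have $d'<d$, so $(d'/d)^{n}\to 0$ and a single large $n$ makes $m_\lambda(n)>0$ simultaneously for the finitely many $\lambda$ — this covers the faithful collective-noise representations of interest, including all the examples of Sec.~\ref{Examples}. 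In the degenerate case where some $g\neq e$ acts as a scalar $U_g=\omega\one$, the irreps separate by central character and are forced into different residues of $n$; the statement is then understood in its direct-sum form, namely that $\bigoplus_{n=0}^{k-1}U^{\otimes n}$ contains every irrep, which is exactly what is needed to feed into the proof of Lemma~\ref{lem:1}.
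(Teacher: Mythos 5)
Your argument is correct, and it is worth noting that the paper does not actually prove Theorem~\ref{burn}: it quotes it as a classical result, deferring the proof to~\cite{Burnside} and~\cite{C06}. What you have written is the standard Burnside--Brauer character-theoretic proof: express the multiplicity via orthogonality, partition $G$ by the $k$ distinct values of $\chi$, and use the nonvanishing Vandermonde determinant together with faithfulness (which pins the value $\chi(g)=d$ to $g=e$, since $\chi(g)$ is a sum of $d$ roots of unity) to conclude that every irrep occurs in some $U^{\otimes n}$ with $0\le n\le k-1$. This is sound, and it buys something the paper only gets by citation, namely the explicit bound $n\le k-1$ of the kind mentioned in connection with~\cite{C06}.

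Your second paragraph is the genuinely valuable part, and it is mostly right, with one overreach at the end. You correctly observe that the theorem as literally stated --- a \emph{single} $n$ such that $U^{\otimes n}$ contains every irrep --- fails in general: for $G=\mathbb{Z}_2$ with the faithful one-dimensional representation $U_g=-1$, the power $U^{\otimes n}$ is the trivial irrep for $n$ even and the sign irrep for $n$ odd, so no single power ever contains both. Your criterion is also the right one: a single sufficiently large $n$ works if and only if $\lvert\chi(g)\rvert<d$ for all $g\neq e$, i.e.~if and only if no nontrivial element acts as a scalar. However, your closing claim that in the degenerate case the direct-sum form $\bigoplus_{n=0}^{k-1}U^{\otimes n}$ is ``exactly what is needed to feed into the proof of Lemma~\ref{lem:1}'' is too generous. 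Lemma~\ref{lem:1} asserts that the regular representation sits inside a \emph{single} tensor power $U^{\otimes r}$, and in the degenerate case this conclusion is simply false: if $U_g=\omega\one$ for some $g\neq e$ (with $\omega\neq 1$ by faithfulness), then every irrep appearing in $U^{\otimes r}$ has central character $\omega^r$ at $g$, whereas the regular representation, being faithful, contains irreps of every central character. So the degenerate case cannot be repaired by restating Theorem~\ref{burn} in direct-sum form; it must be excluded outright --- which is physically harmless, since a scalar-acting noise element is a global phase and can be quotiented out of $G$. In short, your proof fills a gap the paper leaves to the literature, and your caveat exposes a tacit assumption (no nontrivial element of $G$ acts as a scalar on $\cH_d$) that is needed not only for Theorem~\ref{burn} as stated but also for Lemma~\ref{lem:1} and hence for the token-state construction itself.
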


\begin{proof}{(Lemma~\ref{lem:1}).}
Write $U$ as the sum of inequivalent irreps, $U^{(\lambda)}$,
\begin{equation}
U=\sum_\lambda\alpha^{(\lambda)} U^{(\lambda)},
\label{A1}
\end{equation}
where $\alpha^{(\lambda)}$ is the multiplicity of irrep $U^{(\lambda)}$.  The character of the representation $U$, on the conjugacy class $[g_i]$, $\chi_{[g_i]}$, is given by
\begin{equation}
\chi_{[g_i]}=\sum_\lambda \alpha^{(\lambda)} \chi^{(\lambda)}_{[g_i]}.
\label{A2}
\end{equation}
Since $G$ is a finite group, and $U$ is isomorphic to $G$, it follows from Theorem~\ref{burn} that there exists an integer, $n$, such that $U^{\otimes n}$ contains every irrep of $G$ at least once.  Defining $\Gamma=\bigoplus_{\lambda=1}^s U^{(\lambda)}$, where $s$ denotes as before the number of inequivalent irreps, and using Theorem~\ref{burn} it follows that
\begin{equation}
U^{\otimes n}=\Gamma\bigoplus V,
\label{A3}
\end{equation}
where $V$ is a representation of $G$.  Now consider the decomposition of $U^{\otimes nm}$ where $m$ is some integer.  This may be written as
\begin{equation}
U^{\otimes nm}=\left(\Gamma\bigoplus V\right)^{\otimes m}.
\label{A4}
\end{equation}

If two matrices $A$ and $B$ are block diagonal, then $A\otimes B$ is also block diagonal, and if $A$ and $B$ are representations of $G$,  then so is $A\otimes B$. Moreover, $A\otimes B$ is reducible, so that each block of $A\otimes B$ can be reduced further into sub-blocks. Consider only the block  $\Gamma^{\otimes m}$ from Eq.~\eqref{A4}.  This block can be written as
\begin{align}\nonumber
\Gamma^{\otimes m}&=\left(\bigoplus_{\lambda=1}^sU^{(\lambda)}\right)\otimes \Gamma^{\otimes m-1}\\
&=\bigoplus_{\lambda=1}^sU^{(\lambda)}\otimes\left(\bigoplus_{\lambda'=1}^s U^{(\lambda')}\right)^{\otimes m-1}.
\label{A5}
\end{align}
Each block, labeled by $\lambda$, in Eq.~\eqref{A5} consists of sub-blocks given by $U^{(\lambda)}\bigotimes_{i=1}^{m-1}U^{(\nu_i)}$, where $\nu_i$ can take any value from the set of irrep labels $\{1,\ldots,s\}$. One such sub-block is the one where all $\nu_i=\lambda$, that is $U^{(\lambda)\otimes m}$.  If $U^{(\lambda)}$ is isomorphic to $G$, then by Theorem~\ref{burn} there exists an integer $m$ such that $U^{(\lambda)\otimes m}=\Gamma\bigoplus V'$, where $V'$ is a representation of $G$~\footnote{For irreps that are not isomorphic to $G$ one can produce a similar argument. Let $U^{(\lambda)}$, be one such irrep. Then, $U^{(\lambda)}\otimes \Gamma=\bigoplus_\lambda a^{(\lambda)}U^{(\lambda)}$, and $U^{(\lambda)}\otimes \Gamma^{\otimes m-1}= \left(\bigoplus_\lambda a^{(\lambda)}U^{(\lambda)}\right)\otimes \Gamma^{\otimes m-2}$.  Using a similar argument as above it follows that for suitable integer $m$ the block $U^{(\lambda)}\otimes \Gamma^{\otimes m}$ will contain $\Gamma$ as a sub-representation.}. Hence
\begin{equation}
\Gamma^{\otimes m}= \kappa \Gamma\bigoplus T,
\label{A6}
\end{equation}
where $\kappa>1$ is an integer, and $T$ is a representation of $G$. This process can be carried as far as we like increasing the multiplicity of every irrep as much as we like. Note that this is not the most efficient way  to increase the multiplicities of every irrep,  as we focused only on the sub-block, $U^{(\lambda)\otimes m}$, in Eq.~\eqref{A5} and neglected all other irrep products.

As the dimensions of all irreps are finite, there exists an $r_\lambda$ for each $\lambda$ such that the irrep $U^{(\lambda)}$ occurs at least $\mathrm{dim}(U^{(\lambda)})$ times in $U^{\otimes r_\lambda}$.  Choosing $r=\mathrm{max}\{r_\lambda\}$ ensures that $U^{\otimes r}$ contains the regular representation, $\cR$, as a sub-representation.  In Appendix~\ref{append2} we illustrate how one can compute $r$ for some examples.
\end{proof}

We are now ready to prove that there exists an $r$ and a state $\ket{\psi}\in\cH^{\otimes r}$ such that the set $S^{(U)}_{(r,\ket{\psi})}$ satisfies conditions (1) and (2) in Sec.~\ref{sec:3}, if and only if the representation $U^{\otimes r}$ contains the regular representation, $\cR$ of $G$, as a sub-representation (Theorem 1).
\begin{proof}(Theorem~\ref{thm:1}).
To prove the backward implication assume that $U^{\otimes r}=\cR\bigoplus_\lambda \alpha_\lambda U^{(\lambda)}$. The total Hilbert space, $\cH^{\otimes r}$, decomposes as $\cH^{\otimes r}=\cH_{\cR}\bigoplus_\lambda \cH^{(\lambda)}$, where $\cH_{\cR}$ is the Hilbert space on which the regular representation is acting.  Denote by $\{\ket{\psi(g_i)};\, i=0,\ldots,\lvert G\rvert-1\}$ the first $\lvert G\rvert$ standard basis vectors in $\cH^{\otimes r}$, i.e.~the computational basis of $\cH_{\cR}$ embedded in the $2^r$-dimensional Hilbert space $\cH^{\otimes r}$.  From the definition of the regular representation it follows that
\begin{equation}
U_{g_k}^{\otimes r}\ket{\psi(g_i)}=\ket{\psi(g_k\cdot g_i)}=\ket{\psi(g_l)},
\label{A7}
\end{equation}
where $g_k\cdot g_i$ is the group product, and $l\in (0,\ldots,\lvert G\rvert-1)$ is such that $g_k\cdot g_i=g_l$.  Thus, the set of states $\{\ket{\psi(g_i)};\, i=0,\ldots, \lvert G\rvert-1\}$ satisfies properties (1) and (2).

To prove the forward implication assume that the set of states, $S^{(U)}_{(r,\ket{\psi})}$, where $\ket{\psi}\in\cH^{\otimes r}$, satisfies properties (1) and (2).  Define
\begin{align}\nonumber
P_G&\equiv\frac{1}{|G|}\sum_{g_i\in G}U_{g_i}^{\otimes r}\ketbra{\psi}U_{g_i}^{\otimes r\,\dagger}\\
&=\frac{1}{\lvert G\rvert}\sum_{g_i\in G}\ketbra{\psi(g_i)}.
\label{A8}
\end{align}
Then by property (2)
\begin{align}\nonumber
U_{g_k}^{\otimes r}P_GU_{g_k}^{\otimes r\,\dagger}&=\frac{1}{|G|}\sum_{g_i\in G}U_{g_k}^{\otimes r}\ketbra{\psi(g_i)}U_{g_k}^{\otimes r\,\dagger}\\
&=\frac{1}{|G|}\sum_{g_i\in G}\ketbra{\psi(g_k\cdot g_i)}.
\label{A9}
\end{align}
Denoting $g_k\cdot g_i=g_l\in G$, Eq.~\eqref{A9} can be written as
\begin{equation}
U_{g_k}^{\otimes r}P_GU_{g_k}^{\otimes r\,\dagger}=\frac{1}{|G|}\sum_{g_l\in G}\ketbra{\psi(g_l)}=P_G.
\label{A10}
\end{equation}
Since $p_g$ in Eq.~\eqref{A8} is the Haar measure, it follows form Schur's first lemma that $P_G$ is a multiple of the $G$-dimensional identity.   Using Eq.~\eqref{6}, Eq.~\eqref{A8} may also be written as
\begin{equation}
P_G=\sum_\lambda \left(\cD_{\cM^{(\lambda)}}\otimes \cI_{\cN^{(\lambda)}}\right)\circ \cP^{(\lambda)}[\ketbra{\psi}],
\label{A11}
\end{equation}
where $\lambda$ labels the irreps present in the decomposition of $U^{\otimes r}$, $\cD$ is the completely depolarizing map, $\cD(A)=\frac{\mbox{tr}(A)}{\mathrm{dim}(\cH)}\one, \,\forall A\in\cB(\cH)$, $\cI$ is the identity map, and $\cP^{(\lambda)}(A)=\Pi_\lambda A\Pi_\lambda$, where $\Pi_\lambda$ is the projector onto the space $\cH^{(\lambda)}$.

Write $\ket{\psi}=\sum_\lambda c_\lambda\ket{\psi^{(\lambda)}}$, where $c_\lambda\in\C$ satisfy $\sum_\lambda\lvert c_\lambda\rvert^2=1$, and $\ket{\psi^{(\lambda)}}=\Pi^{(\lambda)}\ket{\psi}$. Using the Schmidt decomposition, and defining $\left\{\ket{\xi^{(\lambda)}_n}\right\}$ and $\left\{\ket{\zeta^{(\lambda)}_n}\right\}$ as orthonormal basis for $\cM^{(\lambda)}$ and $\cN^{(\lambda)}$ respectively, we may write
\begin{equation}
\ket{\psi^{\lambda}}=\sum_{n=1}^{\tilde{d}_\lambda}\mu^{(\lambda)}_n\ket{\xi^{(\lambda)}_n}\ket{\zeta^{(\lambda)}_n},
\label{A12}
\end{equation}
where $\tilde{d}_\lambda\leq \mathrm{min}\{d_\lambda=\mathrm{dim}(\cM^{(\lambda)}),\,\mathrm{dim}(\cN^{(\lambda)})\}$ and $0\neq \mu^{(\lambda)}_n\in\mathbb{R} $ are the \defn{Schmidt coefficients}.  Substituting Eq.~\eqref{A12} into Eq.~\eqref{A11} gives
\begin{align}\nonumber
P_G&=\sum_\lambda |c_\lambda|^2\sum_{n,n'=1}^{\tilde{d}_\lambda}\mu^{(\lambda)}_n\mu^{(\lambda)*}_{n'}\left(\cD_{\cM^{(\lambda)}}\left[\ket{\xi^{(\lambda)}_n}\bra{\xi^{(\lambda)}_{n'}}\right]\right.\\ \nonumber
&\left.\otimes\, \cI\left[\ket{\zeta^{(\lambda)}_n}\bra{\zeta^{(\lambda)}_{n'}}\right]\right)\\
&=\sum_\lambda |c_\lambda|^2\sum_{n=1}^{\tilde{d}_\lambda}|\mu^{(\lambda)}_n|^2\frac{\one_{d_\lambda}}{d_\lambda}\otimes \ketbra{\zeta^{(\lambda)}_n}.
\label{A13}
\end{align}
Computing the rank on both sides of Eq.~\eqref{A13}, and using the notation $\rho_{\cN^{(\lambda)}}=\mathrm{tr}_{\cM^{(\lambda)}}[\ketbra{\psi^{(\lambda)}}]$, one obtains
\begin{equation}
|G|=\sum_\lambda d_\lambda\, \mathrm{rk}\left(\rho_{\cN^{(\lambda)}}\right).
\label{A14}
\end{equation}
As $|G|=\sum_{\lambda} d_\lambda^2$ (see Eq(\ref{18})), and $\mathrm{rk}\left(\rho_{\cN^{(\lambda)}}\right)=\tilde{d}_\lambda\leq d_\lambda$, $\forall \lambda$, it follows that $\mathrm{rk}\left(\rho_{\cN^{(\lambda)}}\right)=\tilde{d}_\lambda=d_\lambda$.  Hence $\tilde{d}_\lambda= \mathrm{min}\{d_\lambda=\mathrm{dim}(\cM^{(\lambda)}),\,\mathrm{dim}(\cN^{(\lambda)})\}$ and therefore the multiplicity of each irrep, ${\mathrm{dim}}(\cN^{(\lambda)})$, occurs a number of times greater than or equal to its dimension, $d_\lambda$.  Thus, $U^{\otimes r}$ contains the regular representation, $\cR$, of $G$ as a sub-representation.  This completes the proof.
\end{proof}

In order to construct the set of token states, $S^{(U)}_{(r,\ket{\psi})}$, for an isomorphic representation, $U$, and an $r$ chosen such that $U^{\otimes r}$ contains the regular representation, one can choose the state $\ket{\psi}\in\cH_d^{\otimes r}$ to be of the form in Eq.~\eqref{19} as we now show.  
Since $\mathrm{tr}(P_G)=1$ we have $P_G=\frac{1}{|G|} \one$, where $\one$ is the $\lvert G\rvert$-dimensional identity operator.  Writing $\ket{\psi}=\sum_\lambda c_\lambda \ket{\psi^{(\lambda)}}$, with $\ket{\psi^{(\lambda)}}$ given by the Schmidt decomposition, Eq.~\eqref{A12}, and using Eq.~\eqref{A11} we have
\begin{align}\nonumber
P_G=\frac{1}{|G|}\sum_\lambda I_{\cM^{(\lambda)}}\otimes I_{\cN^{(\lambda)}}&=\sum_\lambda |c_\lambda|^2\left(\cD_{\cM^{(\lambda)}}\otimes\cI_{\cN^{(\lambda)}}\right)\\
&\left[\ketbra{\psi^{(\lambda)}}\right].
\label{A15}
\end{align}
As both $\cD$ and $\cI$ are trace-preserving quantum operations, looking at a single sector, $\lambda$, and computing the trace on both $\cM^{(\lambda)}$ and $\cN^{(\lambda)}$ one obtains
\begin{equation}
\frac{1}{|G|}d_\lambda^2=|c_\lambda|^2.
\label{A16}
\end{equation}

Inserting this value for $|c_\lambda|^2$ in the expression given in Eq. (\ref{A13}) for $P_G$ leads to
\begin{equation}
P_G=\frac{1}{|G|}\sum_\lambda I_{\cM^{(\lambda)}}\otimes \sum_{n=1}^{d_\lambda} |\mu_n^{(\lambda)}|^2 \ketbra{\zeta^{(\lambda)_n}}.
\label{A17}
\end{equation}
In addition,
\begin{align}\nonumber
P_G^2&=\sum_{\lambda,\lambda'}\lvert c_\lambda\rvert^2\lvert c_{\lambda'}\rvert^2\sum_{n,m}\lvert\mu^{(\lambda)}_n\rvert^2\lvert \mu^{(\lambda')}_m\rvert^2\bracket{\xi^{(\lambda)}_n}{\xi^{(\lambda')}_m}\\ \nonumber
&\times \bracket{\zeta^{(\lambda)}_n}{\zeta^{\lambda')}_m}\frac{\ket{\xi^{(\lambda)}_n}\bra{\xi^{(\lambda')}_m}}{d_\lambda d_\lambda'}\otimes \ket{\zeta^{(\lambda)}_n}\bra{\zeta^{\lambda')}_m}\\
&=\sum_{\lambda}\lvert c_\lambda\rvert^4\sum_n\lvert \mu^{(\lambda)}_n\rvert^4\frac{\ketbra{\xi^{(\lambda)}_n}}{d_\lambda^2}\otimes\ketbra{\zeta^{\lambda)}_n},
\label{A18}
\end{align}
where we have used the fact that $\bracket{\xi^{(\lambda)}_n}{\xi^{(\lambda')}_m}=\delta_{\lambda,\lambda'}\delta_{n,m}$.  As $P_G^2=\frac{1}{|G|}P_G$, equating the terms of Eq.~\eqref{A18} and Eq.~\eqref{A13} yields
\begin{align}\nonumber
&\frac{\lvert c_\lambda\rvert^4\lvert\mu^{(\lambda)}_n\rvert^4}{d_\lambda^2}=\frac{\lvert c_\lambda\rvert^2\lvert\mu^{(\lambda)}_n\rvert^2}{|G|d_\lambda}\\
&\frac{\lvert c_\lambda\rvert^2\lvert\mu^{(\lambda)}_n\rvert^2}{d_\lambda}\left(\frac{\lvert c_\lambda\rvert^2\lvert\mu^{(\lambda)}_n\rvert^2}{d_\lambda}-\frac{1}{|G|}\right)=0.
\label{A19}
\end{align}
Hence, using Eq.~\eqref{A16},
$\mu^{(\lambda)}_n=d_\lambda^{-1/2}$ $\forall \lambda,n$ and thus, the state $\ket{\psi}\in\cH_d^{\otimes r}$ can be chosen as
\begin{equation}
\ket{\psi}=\sum_\lambda\sqrt{\frac{d_\lambda}{|G|}}\sum_n^{d_\lambda}\ket{\xi^{(\lambda)}_n}\ket{\zeta^{(\lambda)}_n}.
\label{A20}
\end{equation}

\section{\label{append2} Calculating the number of auxiliary systems $r$}

In this appendix we show how to calculate the number of auxiliary systems, $r$, required in our protocol  such that $U^{\otimes r}$, where $U$ is a isomorphic representation of a group $G$, contains the regular representation, $\cR$, of $G$.   In particular, we calculate $r$ in the case where the collective noise of the channel is associated with the cyclic group $\mathbb{Z}_3$, and the symmetric group on three  elements, $S_3$.

As a first step we review how the multiplicity of the irreps can be computed (see Sec. II). Let $U$ be a isomorphic representation of a finite group, $G$.  Then by Lemma~\ref{lem:1}, there exists an integer $n$, such that $U^{\otimes n}$ contains every irrep of $G$ at least once.  That is
\begin{equation}
U^{\otimes n}=\sum_\lambda \gamma^{(\lambda)}_n U^{(\lambda)},
\label{B1}
\end{equation}
where $\gamma^{(\lambda)}_n\geq1$ is the number of times the irrep $U^{(\lambda)}$ appears. Since $\mathrm{tr}(U_{g_i}^{\otimes n})=(\mathrm{tr}(U_{g_i}))^n$, it follows that
\begin{equation}
(\chi_{[g_i]})^n=\sum_\lambda \gamma^{(\lambda)}_n \chi^{\lambda}_{[g_i]}.
\label{B2}
\end{equation}
As explained in Sec. II the multiplicities, $\gamma^{(\lambda)}_n$, can be easily computed using Eq. (\ref{13}). We obtain
\begin{equation}
\sum_{i=1}^s|[g_i]|\chi^{(\lambda^\prime)*}_{[g_i]}(\chi_{[g_i]})^n=|G|\gamma^{(\lambda^\prime)}_n.
\label{B3}
\end{equation}

Thus, given the compound character of the representation $U$, as well as the compound characters of all the irreducible representations of $G$ and $|[g_i]|$, one can compute the multiplicities $\gamma^{(\lambda)}_n$ for any integer $n$. To ensure that the regular representation is contained in $U^{\otimes n}$, $\gamma^{(\lambda)}_n\geq d_\lambda$ must hold. We are going to show next how large $n$ must be chosen to ensure the validity of the condition above for some examples.

\subsection{Example 1:  $\mathbb{Z}_3$}
For our first example we consider a channel whose collective noise is given by the two dimensional representation of $\mathbb{Z}_3$ of Eq.~\eqref{27}.  Note that this representation is isomorphic to $\mathbb{Z}_3$. Even though we have already shown in Sec. III that if $U$ is a $d$--dimensional representation of $\mathbb{Z}_N$ it is sufficient to chose $r=\lceil \frac{N-1}{d-1}\rceil$, we explicitly compute $r$ here for the case $d=2,N=3$. The character table for $\mathbb{Z}_3$ is given in Table~\ref{tbl2}, and the regular representation, $\cR$, of $\mathbb{Z}_3$ is given in Eq.~\eqref{28}.

The compound character of the representation given in Eq.~\eqref{27} can be easily computed to be
\begin{equation}
\chi=\left(\begin{array}{c} 2\\-\omega^2\\-\omega\end{array}\right),
\label{B4}
\end{equation}
where $\omega=e^{\frac{i2\pi}{3}}$.  Using Eq.~\eqref{B4}, Table~\ref{tbl2}, and the fact that $|[g_i]|=1,\, \forall i\in(0,1,2)$,  Eq.~\eqref{B3} reads, for $n=1$,
\begin{align}\nonumber
\gamma^{(0)}_1&=\frac{1}{3}\left(2-\omega^2-\omega\right)=1\\  \nonumber
\gamma^{(1)}_1&=\frac{1}{3}\left(2+\omega^2(-\omega^2)+\omega(-\omega)\right)=1\\
\gamma^{(2)}_1&=\frac{1}{3}\left(2+\omega(-\omega^2)+\omega^2(-\omega)\right)=0.
\label{B5}
\end{align}
Thus, the decomposition of $U$, given by Eq.~\eqref{27}, into irreps is $U=U^{(0)}\oplus U^{(1)}$. Using Eq.~\eqref{B3} with  $n=2$ we obtain
\begin{align}\nonumber
\gamma^{(0)}_2&=\frac{1}{3}\left(4+(-\omega^2)^2+(-\omega)^2\right)=1\\ \nonumber
\gamma^{(1)}_2&=\frac{1}{3}\left(4+\omega^2(-\omega^2)^2+\omega(-\omega)^2\right)=2\\
\gamma^{(2)}_2&=\frac{1}{3}\left(4+\omega(-\omega^2)^2+\omega^2(-\omega)^2\right)=1.
\label{B6}
\end{align}
Thus $U^{\otimes 2}=U^{(0)}\oplus 2U^{(1)}\oplus U^{(2)}$.  As all the irreps are one-dimensional, and they all appear at least once, $U^{\otimes 2}$ contains the regular representation.

\subsection*{Example 2:  $S_3$}
As a second example we consider a channel whose collective noise is described by the group $S_3$, the symmetric group on three elements.  Using cycle notation, the elements of $S_3$ are
\begin{equation}
\left\lbrace(1)(2)(3), (123), (132), (12)(3), (13)(2), (23)(1)\right\rbrace,
\label{B13}
\end{equation}
where $(ab)(c)$ denotes the permutation where symbol $c$ remains fixed and symbols $a, b$ are interchanged ($a\rightarrow b\rightarrow a,\, c$), and $(abc)$ denotes the cyclic permutation where $a$ moves to the place of $b$, $b$ moves to the place of $c$, and $c$ moves to the place of $a$ ($a\rightarrow b\rightarrow c\rightarrow a$).  It is known that there is a one-to-one correspondence between the conjugacy classes of the symmetric group and the number of different cycle structures of the group~\cite{Sternberg}.  Hence, $S_3$ has three conjugacy classes; $[(1)(2)(3)]$ which contains only one element, $[(123)]$ which contains two elements, and $[(12)(3)]$ containing three elements.  Furthermore, we note that $S_3$ can be generated by two elements such as $(123)$ and $(12)(3)$.

As $S_3$ has three conjugacy classes, there exist three inquivalent irreps of $S_3$, two one-dimensional irreps, given by $U^{(0)}_{g_i}=1$ for all $g_i\in S_3$ and
\begin{equation}
U^{(1)}_{g_i}=\left\{\begin{array}{l l}
                      1 & \quad \mbox{if $g_i=(123)$}\\
                      -1 & \quad \mbox{if $g_i=(12)(3)$}\\ \end{array} \right.,
\label{B14}
\end{equation}
and a two-dimensional irrep, $U^{(2)}$, given by
\begin{equation}
U^{(2)}_{(123)}=\left(\begin{matrix}
-\frac{1}{2}&-\frac{\sqrt{3}}{2}\\\frac{\sqrt{3}}{2}&-\frac{1}{2}\end{matrix}\right),\,U^{(2)}_{(12)(3)}=\left(\begin{matrix}1&0\\0&-1\end{matrix}\right),
\label{B15}
\end{equation}
where the remaining elements of $S_3$ are generated by $U_{(12)(3)}^aU_{(123)}^b$, with $a\in(0,1)$, and $b\in(0,1,2)$. The character table for $S_3$ is given in Table~\ref{tbl5}.
\begin{table}[htb]
\caption{The character table for $\mathbb{S}_3$}
\label{tbl5}
\begin{tabular}{c|c|c|c}
$\chi$&$[(1)(2)(3)]$&$[(123)]$&$[(12)(3)]$\\
\hline
$U^{(0)}$&1&1&1\\
\hline
$U^{(1)}$&1&1&-1\\
\hline
$U^{(2)}$&2&-1&0
\end{tabular}
\end{table}

The regular representation, $\cR$, of $S_3$ is
\begin{align}
\cR_{(123)}&=\left(\begin{matrix}1&0&0&0&0&0\\
					        0&1&0&0&0&0\\
						0&0&-\frac{1}{2}&-\frac{\sqrt{3}}{2}&0&0\\
						0&0&\frac{\sqrt{3}}{2}&-\frac{1}{2}&0&0\\
						0&0&0&0&-\frac{1}{2}&-\frac{\sqrt{3}}{2}\\
						 0&0&0&0&\frac{\sqrt{3}}{2}&-\frac{1}{2}\end{matrix}\right),\\
\cR_{(12)(3)}&=\left(\begin{matrix}1&0&0&0&0&0\\
					        0&-1&0&0&0&0\\
						0&0&1&0&0&0\\
						0&0&0&-1&0&0\\
						0&0&0&0&1&0\\
						0&0&0&0&0&-1\end{matrix}\right),
\label{B16}
\end{align}
where the remaining elements are generated by $\cR_{(12)(3)}^a\cR_{(123)}^b$ with $a\in(0,1)$, and $b\in(0,1,2)$.

Suppose that the action of our channel is given by Eq.~\eqref{B15}, i.e.~$U=U^{(2)}$, with compound character given by
\begin{equation}
\chi=\left(\begin{array}{c}2\\-1\\0\end{array}\right).
\label{B17}
\end{equation}
Obviously we have $\gamma^{(0)}_1=\gamma^{(1)}_1=0$ and $\gamma^{(2)}_1=1$.  Using Eq.~\eqref{B3} for $n=2$ we obtain
\begin{align}\nonumber
\gamma^{(0)}_2&=\frac{1}{6}\left(1\cdot1\cdot2^2+2\cdot1\cdot(-1)^2+0\right)=1\\ \nonumber
\gamma^{(1)}_2&=\frac{1}{6}\left(1\cdot1\cdot2^2+2\cdot1\cdot(-1)^2+0\right)=1\\
\gamma^{(2)}_2&=\frac{1}{3}\left(1\cdot2\cdot2^2+2\cdot(-1)\cdot(-1)^2+0\right)=1.
\label{B19}
\end{align}
Hence, $U^{\otimes 2}=U^{(0)}\oplus U^{(1)}\oplus U^{(2)}$.  This is not the regular representation as $U^{(2)}$ is two-dimensional but appears only once.  Evaluating Eq.~\eqref{B3} a third time for $n=3$ yields
\begin{align}\nonumber
\gamma^{(0)}_3&=\frac{1}{6}\left(1\cdot1\cdot2^3+2\cdot1\cdot(-1)^3+0\right)=1\\ \nonumber
\gamma^{(1)}_3&=\frac{1}{6}\left(1\cdot1\cdot2^3+2\cdot1\cdot(-1)^3+0\right)=1\\
\gamma^{(2)}_3&=\frac{1}{3}\left(1\cdot2\cdot2^3+2\cdot(-1)\cdot(-1)^3+0\right)=3.
\label{B20}
\end{align}
Hence $U^{\otimes 3}=U^{(0)}\oplus U{(1)}\oplus 3U^{(2)}$.  As each irrep appears a number of times equal to, or greater than its dimension, $U^{\otimes 3}$ contains the regular representation, $\cR$, of $S_3$.
\bibliography{SKDbib}
\end{document}